\newtheorem{thm}{Theorem}[section]
\newtheorem{cor}{Corollary}[section]
\newtheorem{dfn}{Definition}[section]
\newtheorem{rem}{Remark}[section]
\newlength{\dhatheight}
\newcommand{\dhat}[1]{%
    \settoheight{\dhatheight}{\ensuremath{\hat{#1}}}%
    \addtolength{\dhatheight}{-0.35ex}%
    \hat{\vphantom{\rule{1pt}{\dhatheight}}%
    \smash{\hat{#1}}}}
\begin{document}



\title {APPROXIMATE CALCULATION OF TUKEY'S DEPTH AND MEDIAN WITH HIGH-DIMENSIONAL DATA
}

\YujorAuthor{Milica BOGI\'CEVI\'C}{School of Electrical Engineering, University of Belgrade, Belgrade, Serbia}{antomripmuk@yahoo.com}

\YujorAuthor{Milan MERKLE}{School of Electrical Engineering, University of Belgrade, Belgrade, Serbia}{emerkle@etf.rs}

\newcommand{\YujorAuthorNames}{M. Bogi\'cevi\'c, M. Merkle}

\newcommand{\YujorShortTitle}{Calculation of Tukey's depth
}

\newcommand{\YujorMSC}{62G15, 68U05}
\newcommand{\YujorKeywords}{Big data,  multivariate medians, depth functions, computing Tukey's depth.}

\begin{abstract}
We present a new fast approximate algorithm for Tukey (halfspace) depth level sets and its implementation-ABCDepth.
Given a $d$-dimensional data set
for any $d\geq 1$, the algorithm is based on a representation of level sets as intersections of balls
in $\mathbb{R}^d$. Our approach does not need calculations of projections of sample points to directions.
This novel idea enables calculations of approximate level sets in very high dimensions with complexity which is linear in $d$,
which provides a great advantage over all other approximate algorithms. Using different versions of this algorithm we demonstrate approximate
calculations of the deepest set of points ("Tukey median") and  Tukey's depth of a sample point or  out-of-sample point, all with a
linear in $d$ complexity. An additional theoretical advantage of this approach is that the data points are not assumed to be in "general position".
Examples with real and synthetic data show that the executing time of the algorithm in all mentioned
versions in high dimensions is much smaller  than the time of other implemented algorithms. Also, our algorithms can be used with
thousands of multidimensional observations.

\end{abstract}

\maketitle




\newpage

\section{INTRODUCTION}
\label{seci}

\medskip

 Although this paper is about multivariate medians and related notions, for completeness and understanding rationale of multivariate setup,
 we start from the univariate case. In terms of probability distributions, let $X$ be a random variable and let $\mu=\mu_X$ be the corresponding distribution, i.e., a probability
measure
on $(\mathbb{R},\mathcal B)$ so that $P(X\leq x)= \mu \{ (-\infty, x ]\}$.  For univariate case, a median of $X$  (or a median of $\mu_X$) is
any number $m$ such that
$P(X\leq m)\geq 1/2$ and $P(X\geq m) \geq 1/2$.  In terms of data sets, this property means that to reach any median point from the outside of the data set,
we have to pass at least $1/2$ of data points, so this is the deepest point within the data set. With respect to this definition, we can define the depth of
any point $x\in \mathbb{R}$ as
\begin{equation}
\label{depd1}
D(x,\mu)=\min\{P(X\leq x), P(X\geq x)\} =\min \{\mu ((-\infty, x]), \mu ([x, +\infty))\}.
\end{equation}
The set of all median points $\{\rm Med \mu\}$ is  a non-empty compact interval (can be a singleton).
As shown in \cite{merkle05},
\begin{equation} \label{unimed}
\{\rm Med \mu\} =  \bigcap_{J=[a,b]:\ \mu (J) >1/2} J,
\end{equation}
 and (\ref{unimed}) can be taken for an alternative (equivalent) definition of univariate median set.  In $\mathbb{R}^d$ with $d>1$, there are quite a few
 different concepts of depth and medians (see  for example \cite{survey15}, \cite{small90}, \cite{zuoserf00}). In this paper we
 propose an algorithm for  halfspace depth (Tukey's depth, \cite{tukey75}), which is based on extension and generalization of (\ref{unimed}) to $\mathbb{R}^d$
 with balls in place of  intervals as in  \cite{merkle10}.

 The rest of the paper is organized as follows. Section 2 deals with a   theoretical background of the algorithm in a broad sense.
 In  Section 3 we present an approximate algorithm for finding Tukey median as well as  versions of the same algorithm for finding Tukey depth of a sample point,
the depth of  out-of-sample point,  and for  data contours. We also provide a derivation of complexity for each version of the algorithm and
present examples. Section 4 provides a comparison with several other algorithms in terms of  performances.

\medskip


\section {THEORETICAL BACKGROUND: DEPTH FUNCTIONS BASED ON FAMILIES OF CONVEX SETS}
\label{sect}

\medskip

\begin{dfn} \label{ddepthf} Let $\mathcal V$ be a family of convex sets in $\mathbb{R}^d$, $d\geq 1$, such that:
(i) $\mathcal V$ is closed under translations and
(ii) for every ball $B\in \mathbb{R}^d$ there exists a set $V\in \mathcal V$ such that $B\subset V$.
Let $\mathcal U$ be the collection of complements of sets in $\mathcal V$.  For a given probability measure $\mu$ on $\mathbb{R}^d$, let us  define
\begin{equation}
 \label{depdg}
D(x;\mu, \mathcal V)= \inf \{\mu(U)\; |\; x\in U\in \mathcal U\}= 1-\sup\{\mu(V)\; |\; V\in \mathcal V,\ x\in V'\}
\end{equation}
The function $x\mapsto D(x;\mu,\mathcal V)$ will be called a depth function based on the family $\mathcal V$.
\end{dfn}

\begin{rem}{\rm Definition \ref{ddepthf} is a special case of Type $D$ depth functions as defined in \cite{zuoserf00} which can be obtained by
generalizations of (\ref{depd1}) to higher dimensions.
The conditions stated in \cite{merkle10} that provide a desirable behavior of the depth function are satisfied in this special case,
 with additional requirements
that sets in $\mathcal V$ are closed or compact. For instance, taking $\mathcal V$ to be the family of all "boxes" with sides parallel
to coordinate hyper-planes
yields the deepest points which coincide with the coordinate-wise median. It is easy to see that in this case,
 regardless of the measure $\mu$, there exists at least one point $x\in \mathbb{R}^d$ with $D(x;\mu,\mathcal V)\geq \frac{1}{2}$. According to
the next theorem \cite[Theorem 4.1]{merkle10}, in general case of arbitrary $\mathcal V$, the maximal depth can't be smaller than $1/(d+1)$.

  }
\end{rem}

\begin{thm} \label{alpha} Let $\mathcal V$ be any non-empty family of compact convex subsets of $\mathbb{R}^d$ satisfying the conditions as in
Definition \ref{ddepthf}.
Then for any
probability measure $\mu$ on $\mathbb{R}^d$ there exists a point $x\in \mathbb{R}^d$ such that $D(x;\mu,{\mathcal V})\geq \frac{1}{d+1}$.
\end{thm}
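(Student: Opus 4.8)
The plan is to deduce this from Helly's theorem in $\mathbb{R}^d$, which is the natural tool whenever the constant $1/(d+1)$ appears. The depth function $D(x;\mu,\mathcal V)$ is defined via a supremum over sets $V\in\mathcal V$ with $x$ in the complement $V'$; equivalently $D(x;\mu,\mathcal V)\ge \alpha$ means that every $U\in\mathcal U$ containing $x$ has $\mu(U)\ge\alpha$, i.e. $x$ avoids every "light" complement. So I want to find a point $x$ lying in the complement of every $V\in\mathcal V$ whose complement has measure $<1/(d+1)$. Dually, writing $\mathcal{F}=\{V\in\mathcal V : \mu(V')<1/(d+1)\}=\{V\in\mathcal V:\mu(V)>d/(d+1)\}$, I need $\bigcap_{V\in\mathcal F} V'\ne\emptyset$, which is the same as saying the sets $\{V : V\in\mathcal F\}$ do not cover $\mathbb{R}^d$ — no, more precisely I need a point outside the union $\bigcup_{V\in\mathcal F}V$. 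Hmm, that is not quite a Helly statement about $V$'s directly; instead the cleaner route is to work with the convex sets $V$ themselves and observe that I should be looking at a point $x$ such that $x\notin V$ for all $V$ with $\mu(V)>d/(d+1)$.

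**Next I would** set up the Helly argument on the family $\mathcal V$. The key observation: if $V_0,V_1,\dots,V_d\in\mathcal V$ each satisfy $\mu(V_i)>d/(d+1)$, then $\mu\big(\bigcap_{i=0}^d V_i\big)\ge 1-\sum_{i=0}^d\mu(V_i')>1-(d+1)\cdot\frac1{d+1}=0$, so the intersection is non-empty. Thus every $d+1$ of the sets in $\mathcal G:=\{V\in\mathcal V:\mu(V)>d/(d+1)\}$ have a common point. Since each $V$ is compact and convex, Helly's theorem gives a point $p\in\bigcap_{V\in\mathcal G}V$. But I want the opposite: a point in all the complements. Here I use the translation-closure of $\mathcal V$: the natural candidate is to reflect the situation. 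Actually the right move is to note that $D(x;\mu,\mathcal V)\ge\alpha$ fails at $x$ exactly when there is $V\in\mathcal V$ with $x\notin V$ and $\mu(V)>1-\alpha$; so the set of "bad" points for level $\alpha=1/(d+1)$ is $\bigcup_{V\in\mathcal G}(\mathbb{R}^d\setminus V)=\mathbb{R}^d\setminus\bigcap_{V\in\mathcal G}V$. Therefore the set of points with depth $\ge 1/(d+1)$ is exactly $\bigcap_{V\in\mathcal G}V$, and by the Helly argument above this is non-empty. That is the whole proof.

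**The main obstacle** I anticipate is the compactness/finiteness bookkeeping needed to invoke Helly's theorem in its infinite form: Helly for infinitely many sets requires the sets to be compact (or one of them compact and the rest closed), which is exactly why the theorem hypothesizes that $\mathcal V$ consists of compact convex sets. I would state Helly's theorem for an arbitrary family of compact convex sets in $\mathbb{R}^d$ with the finite-intersection property and apply it to $\mathcal G$. A secondary point to verify carefully is the measure-subadditivity step $\mu\big(\bigcap V_i\big)\ge 1-\sum\mu(V_i')$, which is just finite subadditivity of $\mu$ on the complements together with $\mu(\mathbb{R}^d)=1$; and one should check $\mathcal G$ is non-empty or handle that case trivially — but condition (ii) (every ball sits inside some $V\in\mathcal V$) guarantees that for a large enough ball $B$ we can pick $V\supseteq B$ with $\mu(V)$ as close to $1$ as we like, so $\mathcal G\ne\emptyset$. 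Finally I would note that strictness of the inequalities ($>d/(d+1)$ rather than $\ge$) is what makes the sum of complement-measures strictly less than $1$, forcing the finite intersections to be non-empty rather than merely of measure zero.
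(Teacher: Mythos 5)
Your argument is correct and is the intended one: the union bound shows that any $d+1$ members of $\mathcal G=\{V\in\mathcal V:\mu(V)>\frac{d}{d+1}\}$ have a common point (since the complements have total measure $<1$), Helly's theorem together with compactness of the sets yields a point in $\bigcap_{V\in\mathcal G}V$, and by Theorem \ref{repls} (formula (\ref{lsvi})) this intersection is precisely $S_{1/(d+1)}$, so that point has the required depth. Note that the paper does not prove this statement itself but cites it as Theorem 4.1 of \cite{merkle10}; the proof there is exactly this Helly/centerpoint argument, so you have reproduced the intended route, including the correct handling of $\mathcal G\neq\emptyset$ via condition (ii). The only blemish is the detour in your opening paragraph toward a point lying in all the \emph{complements}, which you rightly abandon: the point furnished by Helly in $\bigcap_{V\in\mathcal G}V$ is already the desired deep point.
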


The set of points with maximal depth is called {\em the center of distribution}, denoted  as $ C(\mu,\mathcal V)$.
In general, one can observe {\em level sets} (or {\em depth regions} or {\em depth-trimmed regions}) of level  $\alpha$ defined as
\begin{equation}
 \label{salpha}
 S_{\alpha}= S_{\alpha}(\mu,\mathcal V):=\{ x\in \mathbb{R}^d\; | \; D(x; \mu,\mathcal V)\geq \alpha \} .
\end{equation}

Clearly, if $\alpha_1<\alpha_2$, then $S_{\alpha_1} \supseteq S_{\alpha_2}$ and $S_{\alpha}=\emptyset$ for $\alpha>\alpha_{m}$, where $\alpha_m$ is
the maximal depth for a given probability measure $\mu$.

The borders of depth level sets  are called {\em depth contours} (in two dimensions)  or {\em depth surfaces} in general. Depth surfaces  are of
 interest in multivariate statistical inference (see \cite{doga92,dughocha11,zhouserf08,rouru99}). Although (\ref{salpha}) suggests
that in order to describe level sets we   need to calculate depth functions, there is another way, as the next result shows
(\cite{dono82}, \cite{zuoserf00} and Theorem 2.2. in \cite{merkle10}).

\begin{thm} \label{repls}
Let $D(x;\mu,\mathcal V)$ be defined for $x\in \mathbb{R}^d$ as in Definition \ref{ddepthf}. Then for any $\alpha\in (0,1]$
\begin{equation}
\label{lsvi}
S_{\alpha}(\mu,\mathcal V) =
\bigcap_{V\in \mathcal V, \mu(V)>1-\alpha} V.
\end{equation}
\end{thm}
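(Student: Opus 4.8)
The plan is to prove the identity~(\ref{lsvi}) by double inclusion, working directly from the second (equivalent) form of the depth function in~(\ref{depdg}),
\[
D(x;\mu,\mathcal V) = 1 - \sup\{\mu(V)\mid V\in\mathcal V,\ x\in V'\},
\]
together with the definition~(\ref{salpha}) of $S_\alpha$. The whole argument reduces to a single logical equivalence: for a fixed $x$, the inequality $D(x;\mu,\mathcal V)\ge\alpha$ is the same as $\sup\{\mu(V)\mid V\in\mathcal V,\ x\in V'\}\le 1-\alpha$, which says precisely that \emph{every} $V\in\mathcal V$ that fails to contain $x$ has $\mu(V)\le 1-\alpha$; taking the contrapositive $V$ by $V$, this is the same as saying that every $V\in\mathcal V$ with $\mu(V)>1-\alpha$ must contain $x$, i.e.\ that $x\in\bigcap_{V\in\mathcal V,\ \mu(V)>1-\alpha}V$. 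So the theorem is essentially an unwinding of Definition~\ref{ddepthf}.

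In detail, for the inclusion $S_\alpha\subseteq\bigcap_{\mu(V)>1-\alpha}V$ I would argue by contradiction: if $x\in S_\alpha$ but $x\notin V_0$ for some $V_0\in\mathcal V$ with $\mu(V_0)>1-\alpha$, then $U_0:=V_0'\in\mathcal U$ contains $x$, and the first form of~(\ref{depdg}) gives $D(x;\mu,\mathcal V)\le\mu(U_0)=1-\mu(V_0)<\alpha$, contradicting $x\in S_\alpha$; here the only fact used is that $V_0$ is Borel, so $\mu(V_0')=1-\mu(V_0)$. For the reverse inclusion, suppose $x\in\bigcap_{\mu(V)>1-\alpha}V$ and take an arbitrary $W\in\mathcal V$ with $x\in W'$. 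Since $x\notin W$, the defining property of the intersection forbids $\mu(W)>1-\alpha$, so $\mu(W)\le 1-\alpha$. As $W$ was arbitrary among sets of $\mathcal V$ whose complement contains $x$, the supremum in~(\ref{depdg}) is $\le 1-\alpha$, whence $D(x;\mu,\mathcal V)\ge\alpha$ and $x\in S_\alpha$.

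I do not expect a genuine obstacle; the two places that merit a word of care are purely bookkeeping. First, the direction of the inequalities: the strict ``$>1-\alpha$'' indexing the intersection is exactly what pairs with the non-strict ``$\ge\alpha$'' defining $S_\alpha$ through the supremum, and one should check that a set $V$ with $x\in V'$ and $\mu(V)$ equal to (or tending to) $1-\alpha$ creates no inconsistency — it does not, since such $V$ is excluded from the intersection while it only forces $D(x;\mu,\mathcal V)=\alpha$. Second, the supremum in~(\ref{depdg}) is taken over a collection that is non-empty in every case of interest — e.g.\ whenever members of $\mathcal V$ are bounded, as under the hypotheses of Theorem~\ref{alpha}, one can translate a set of $\mathcal V$ away from $x$ to produce a $V\in\mathcal V$ with $x\in V'$ — and in the degenerate case where it is empty the convention $\sup\emptyset=0$ gives $D(x;\mu,\mathcal V)=1$, consistently with $x$ lying (vacuously) in the intersection. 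With these remarks the double inclusion above completes the proof.
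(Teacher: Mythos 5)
Your proof is correct: the double inclusion is just an unwinding of Definition \ref{ddepthf}, with the complement of a set $V_0\in\mathcal V$ of measure $>1-\alpha$ that misses $x$ serving as a witness $U_0\in\mathcal U$ with $\mu(U_0)<\alpha$ in one direction and the contrapositive giving the other, and your two bookkeeping remarks (the pairing of the strict ``$>1-\alpha$'' with the non-strict ``$\geq\alpha$'', and the empty-index conventions) are exactly the points that need care and are handled correctly. Note that the paper does not prove Theorem \ref{repls} itself but cites it from \cite{dono82}, \cite{zuoserf00} and Theorem 2.2 of \cite{merkle10}, so there is no in-paper argument to compare against; yours is the standard one.
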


From (\ref{lsvi}) it follows that the center of a distribution is  the smallest non-empty level set, or equivalently,
\begin{equation}
\label{cenf}
 C(\mu,\mathcal V)=\bigcap_{\alpha:S_\alpha \neq \emptyset} S_{\alpha} (\mu,\mathcal V)
\end{equation}

Since sets in $\mathcal V$ are convex, the level sets are also convex. From (\ref{salpha}) and (\ref{lsvi}) we can see that
the depth function can be uniquely reconstructed starting
from level sets as follows.

\begin{cor} \label{deviale}For given $\mu$ and $\mathcal V$, let $S_{\alpha}$, $\alpha\geq 0$ be defined as in (\ref{lsvi}),
with $S_{\alpha} =\emptyset$ for $\alpha >1$.  Then the function $D: \mathbb{R}^d \mapsto [0,1]$
defined by
\begin{equation}
\label{deplf}
D(x) = h\quad \iff \quad x\in S_{\alpha}, \alpha \leq h \quad {\rm and}\quad x\not\in S_{\alpha}, \alpha >h
\end{equation}
is the unique depth function  such that (\ref{salpha}) holds.
\end{cor}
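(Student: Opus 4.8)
The plan is to exhibit the reconstruction map explicitly by setting
\[
D(x)=\sup\{\alpha\ge 0 \;:\; x\in S_\alpha\},
\]
and then to verify, in turn, that $D$ is well defined with values in $[0,1]$, that it satisfies the equivalence (\ref{deplf}) together with the level-set identity (\ref{salpha}), and that no other function does. First I would observe that for each $x$ this supremum is taken over a non-empty subset of $[0,1]$: at $\alpha=0$ the defining condition $\mu(V)>1$ in (\ref{lsvi}) is vacuous, so $S_0=\mathbb{R}^d$ and $0\in\{\alpha\ge 0:x\in S_\alpha\}$; while $S_\alpha=\emptyset$ for $\alpha>1$ by hypothesis, so $x\in S_\alpha$ forces $\alpha\le 1$. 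I would also record the monotonicity $\alpha_1\le\alpha_2\Rightarrow S_{\alpha_1}\supseteq S_{\alpha_2}$, immediate from (\ref{lsvi}) because decreasing $\alpha$ shrinks the family $\{V\in\mathcal V:\mu(V)>1-\alpha\}$ over which one intersects; hence $\{\alpha\ge 0:x\in S_\alpha\}$ is downward closed, an interval with left endpoint $0$.

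The step I expect to be the crux is showing that this supremum is attained, i.e. $x\in S_{D(x)}$; otherwise (\ref{deplf}) could fail precisely at $h=D(x)$. For this I would establish the "left-continuity" identity
\[
\bigcap_{0\le\alpha<h}S_\alpha=S_h,\qquad 0<h\le 1 .
\]
Expanding both sides via (\ref{lsvi}), this reduces to the elementary observation that, for a fixed $V\in\mathcal V$, there exists $\alpha<h$ with $\mu(V)>1-\alpha$ if and only if $\mu(V)>1-h$. (Alternatively one could invoke compactness of the $S_\alpha$, as intersections of compact convex sets, via the finite-intersection property; but the index-set argument is cleaner and uses only (\ref{lsvi}) and not the compactness hypothesis.) Granting the identity: by the definition of the supremum and monotonicity, $x\in S_\alpha$ for every $\alpha<D(x)$, so $x$ lies in the left-hand intersection, hence in $S_{D(x)}$; in particular $\{\alpha\ge 0:x\in S_\alpha\}=[0,D(x)]$.

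With attainment in hand the remaining verifications are routine bookkeeping. For (\ref{deplf}): if $D(x)=h$ then $x\in S_h$, so $x\in S_\alpha$ for all $\alpha\le h$ by monotonicity, and $x\notin S_\alpha$ for $\alpha>h$ by the definition of the supremum; conversely, if $x\in S_\alpha$ for all $\alpha\le h$ and $x\notin S_\alpha$ for all $\alpha>h$ then $\{\alpha\ge 0:x\in S_\alpha\}=[0,h]$, whence $D(x)=h$ — this also shows such an $h$ is unique, so (\ref{deplf}) genuinely defines a function. For (\ref{salpha}): if $x\in S_\alpha$ then $\alpha$ belongs to the set whose supremum is $D(x)$, giving $D(x)\ge\alpha$; conversely $D(x)\ge\alpha$ gives $x\in S_{D(x)}\subseteq S_\alpha$ by attainment and monotonicity, so $S_\alpha=\{x:D(x)\ge\alpha\}$. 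Finally, for uniqueness, suppose $\tilde D\colon\mathbb{R}^d\to[0,1]$ also satisfies $\{x:\tilde D(x)\ge\alpha\}=S_\alpha$ for all $\alpha\ge 0$. Then for each fixed $x$ the set $\{\alpha\ge 0:x\in S_\alpha\}$ equals $\{\alpha\ge 0:\tilde D(x)\ge\alpha\}=[0,\tilde D(x)]$ and also equals $[0,D(x)]$, forcing $\tilde D(x)=D(x)$.
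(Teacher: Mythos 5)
Your proof is correct. There is nothing in the paper to compare it with line by line: Corollary \ref{deviale} is stated without proof, as something one ``can see'' from (\ref{salpha}) and (\ref{lsvi}), so what you have written is the verification the authors leave implicit. You have also correctly identified the one step that is not pure bookkeeping, namely that $\sup\{\alpha\ge 0: x\in S_\alpha\}$ is attained; without attainment, (\ref{deplf}) would fail to define $D$ at points $x$ where $\{\alpha: x\in S_\alpha\}$ is a half-open interval. Your index-set computation giving $\bigcap_{0\le\alpha<h}S_\alpha=S_h$ is the cleanest route to this, and it is worth noting explicitly that it hinges on the strict inequality $\mu(V)>1-\alpha$ in (\ref{lsvi}): with a non-strict inequality the level sets need not be left-continuous in $\alpha$ and the reconstruction would break down, so this is a genuine structural feature of the definition rather than a formality. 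Your preference for the index-set argument over compactness is also the right call, since the corollary as stated does not assume the members of $\mathcal V$ are compact. The only cosmetic addition I would suggest is a closing sentence tying the result back to Definition \ref{ddepthf}: by Theorem \ref{repls} the original depth function $D(\cdot;\mu,\mathcal V)$ satisfies (\ref{salpha}) with the sets $S_\alpha$ of (\ref{lsvi}), so your uniqueness clause shows that the function reconstructed from the level sets is not merely \emph{a} function with those upper level sets but coincides with $D(\cdot;\mu,\mathcal V)$ itself, which is what licenses calling it ``the'' depth function and is the fact the algorithm actually relies on.
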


The algorithm that we propose  primary finds  approximations to level sets based on formula (\ref{lsvi}),
and then finds (approximate) depth via Corollary \ref{deviale}. The algorithm will be demonstrated in the case of half-space depth,
which is described in the next section.

The scientific interest in algorithms for Tukey's depth is shared between Statistics (robust estimation of location parameters, clustering,
classification, outlier detection, especially  with big data in high dimensions) and Computational geometry (as a challenge).

\medskip

\section{ABCDEPTH ALGORITHM FOR TUKEY DEPTH: IMPLEMENTATION AND THE OUTPUT}
\label{seca}

The most prominent representative of type $D$ depth functions is Tukey's  (or halfspace) depth, which is
defined by (\ref{depdg}) with $\mathcal U$ being a family of all open half-spaces, and $\mathcal V$ a family of closed half-spaces, as in the following
definition.
\begin{equation}
\label{odtd}
 D(x;\mu)=  \inf \{\mu(H)\; |\; x\in H\in \mathcal H\},
\end{equation}

where $\mathcal H$ is the family of all open half-spaces. In this section, we consider only half-space depth, so we use the notation $D(x,\mu)$ instead of
$D (x; \mu,{\mathcal V})$.

The idea traces back to J. W. Tukey's lecture notes \cite{tukeymim} and the conference paper  \cite{tukey75}.
It was first formalized in D. L. Donoho's Ph. D. qualifying paper  \cite{dono82} of 1982,
  a technical report \cite{gado87}  and the  1992's paper  \cite{doga92}.

One depth function can be defined based on different families $\mathcal V$.
We say that families $\mathcal V_1$ and $\mathcal V_2$ are
depth-equivalent if $D(x;\mu, {\mathcal V_1})=D(x;\mu, {\mathcal V_2})$ for all $x\in \mathbb{R}^d$ and all probability measures $\mu$.
Sufficient conditions for depth-equivalence are given in  \cite[Theorem 2.1]{merkle10}, where it was shown  that in the case of
half-space depth the following families are depth-equivalent:
a) Family of all open half-spaces; b) all closed  half-spaces; c) all convex sets; d) all compact convex sets; e) all closed or open balls.

For determining level sets we choose closed balls,  and so we can define $\mathcal V$ as a set of all closed balls (hyper-spheres)
and  level sets can be described as
\begin{equation}
\label{sabs}
S_{\alpha}(\mu,\mathcal V) =
\bigcap_{B\in \mathcal V, \mu(B)>1-\alpha} B,
\end{equation}
instead of using the classical approach based on half-spaces. The advantage of the formula (\ref{sabs}) over the intersections of half-spaces is
obvious if we recall that a ball $B$  is already the intersection of all tangent spaces that contain $B$.

From now on, we consider only the case where the underlying probability measure $\mu$ is derived from a given data set.

\subsection{The sample version}\label{saver}
In the data setup with a sample $x_1, \ldots, x_n$ (allowing repetitions) and  the counting measure
\begin{equation}
\label{cmes}
 \mu (A)= \frac{ \# \{x_i: \ x_i \in A\}}{n},
\end{equation}
it is a common practice to express the depth as an integer defined as
\begin{equation}
\label{odtds}
 D(x)=  \min \{\#\{x_i:\ x_i \in H \} \; |\; x\in H\in \mathcal H\},\qquad x\in \mathbb{R}^d,
\end{equation}
whereas the depth in terms of  probability is $D(x)/n$. As it was first noticed  by Donoho \cite{dono82}, the depth can be expressed via
one-dimensional projections to  directions  determined by normal vectors of hyperplanes  that are borders of
 half-spaces (with $d=2$ the borders are straight lines).

\begin{equation}
\label{projpt}
 D(x)= \min_{\|u\|=1} \#\{x_i: \langle u, x_i \rangle \leq \langle u,x \rangle \},
 \end{equation}
 where $\langle \cdot, \cdot \rangle $ is the inner product of vectors in $\mathbb{R}^d$.

The level set in the sample version with $\alpha=k/n$ is defined by

\begin{equation}
 \label{salphas}
 S_{\frac{k}{n}} =\{ x\in \mathbb{R}^d\; | \; D(x)\geq \lfloor n(1 - \alpha) + 1 \rfloor \} = \{ x\in \mathbb{R}^d\; | \; D(x)\geq n-k+1 \},
\end{equation}

and (\ref{sabs}) becomes

\begin{equation}
\label{sabsds}
S_{\frac{k}{n}} = \bigcap_{B\in \mathcal V, \#\{x_i:\ x_i \in B \} \geq  n-k+ 1} B,
\end{equation}
where $\mathcal V$ is a family of all closed balls in $\mathbb{R}^d$.

All so far implemented algorithms (both exact and approximate) are based on calculation of the depth  from the
formula (\ref{projpt}) or its variations.  The approximate algorithm that we propose uses a completely new approach: we start with
a discrete approximation to level sets using formula (\ref{sabs}), and then we calculate the depth in conjunction
with the formula (\ref{deplf}). In the next  subsections we present details of the approximations.

\subsection{First approximation: finite intersection.}\label{firapp}
 For a fixed sample size $n$ and a fixed $k$, for simplicity   we write $S$ instead of  $S_{\frac{k}{n}}$.
 So, for fixed $n$ and $k$, $S$ is an exact (unknown)
level set as in (\ref{salphas}) and (\ref{sabsds}). Let us choose $M$ points to be centers of balls. In most of examples in this paper,
we  choose the  points  from the sample ($M=n$), and then add  new points at random if needed.
The radius of each ball $B_i$ with the center at $c_i$  is  equal to the $(n-k +1)$th smallest
distance between $c_i$ and the points in the sample set $\{ x_1,\ldots, x_n\}$.
 In this way, we end up with the first approximation of the level set $S=S_{\frac{k}{n}}$:
\begin{equation}
\label{fapp}
\hat{S}_M := \bigcap_{i=1}^M {B_i}
\end{equation}
It is natural to assume that if we want to intersect more than $M$ balls, then we just add new balls to the existing intersection, hence
the sets $\hat{S}_M$ are nested and
\begin{equation}
\label{nesN}
\hat{S}_M \supseteq \hat{S}_{M+1} \supset S, \qquad M=1,2,\ldots
\end{equation}

As shown in \cite[Lemma 2]{Akss2010}, to decide whether or not the intersection in (\ref{fapp}) is empty, it takes $M\cdot 2^{O(d^2)}$ time,
so the exact approach is not feasible. On the other hand, for a given point $s\in  \mathbb{R}^d $ it is easy to decide whether or not it belongs to
$\hat S$ as defined in (\ref{fapp}). This observation leads to the second approximation as follows.

\subsection{Second approximation: a discrete set of points.}\label{secdis} Let $A_n=\{x_1,\ldots, x_n\}$ (set of sample points), and
let $D\supset A_n$ be a convex domain in $\mathbb{R}^d$. For $N=n+1,n+2,\ldots $, let $A_N$ be a set obtained from $A_{N-1}$  by
randomly adding one  point from $D$. Then we have

\begin{equation}
\label{nestA}
A_{N+1}\supset A_{N},\quad N=n,n+1,n+2,\ldots,
\end{equation}
and  every set $A_N$  contains all sample points. We call sets  $A_N$ ($N\geq n+1$) {\em augmented data sets}. Points
in $A_N\setminus A_n$ will be called {\em  artificial points} and their role is to "shed light" on a depth region via discrete approximation
as follows:

\begin{equation} \label{setlevp}
\dhat{S}_{M,N} :=A_N\cap \hat{S}_{M}= \{ a\in A_N \; | \; a\in \hat{S}_{M},\}
\end{equation}
where $\hat{S}_{M}$ is defined in (\ref{fapp}).
Let us note that $A_N$ does not depend on $k$, so the same $A_N$
can be used in approximation of all depth regions.

From (\ref{nestA}) and (\ref{setlevp}), it follows that

\begin{equation}
\label{nestS}
\dhat{S}_{M,N} \subseteq \dhat{S}_{M,N+1}\subset \hat{S}_{M}
\end{equation}

Therefore, for a fixed $M$, the sets $\dhat{S}_{M,N}$ approximate $\hat{S}_{M}$ from inside, obviously with  accuracy which  increases with $N$.
In order to make a  contour, we can construct a convex hull of $\dhat{S}_{\alpha}$
using \textit{QuickHull} algorithm, for example. The relations in (\ref{nestS}) remain true with  ${\rm Conv\; }(\dhat{S})$ in place of $\dhat{S}$.
Due to convexity of  $\hat{S}_M$, the approximation  with ${\rm Conv\; }(\dhat{S})$ would be better,
but then the complexity would be too high for really high dimensions. As an alternative, we use
$\dhat{S}_{M,N}$ as the final approximation to the true level set $S=S_{\frac{k}{n}}$.

\subsection{More about artificial points} \label{sectartd}   The simplest way to implement the  procedure  from \ref{firapp}
and \ref{secdis} is to  take $M=N=n$,  and  to use sample points for centers of balls and also for
 finding $\dhat{S}$ in (\ref{setlevp}).
The basic Algorithm 1 of the subsection \ref{subsectm} is presented in that setup.
However, in some cases this approach does not work, regardless of the sample size.
As an example, consider a uniform distribution in the region bounded by circles  $x^2+y^2=r^2_i$,
$r_1=1$ and $r_2=2$. It is easy to prove (see also \cite{dughocha11}) that the depth monotonically increases from $0$ outside of the
larger circle, to $1/2$ at the origin, which is the true and unique median.
With a sample from this distribution, we will not have data points inside the inner circle, and we can not identify the median in the way proposed above.

In similar cases and whenever we have sparse data or small sample size $n$, we can still visually identify depth regions and center
simply by adding artificial points to the data set.
Let the data set contain points $x_1, \ldots, x_n$ and let $x_{n+1},\ldots, x_{N}$ be points chosen from uniform distribution in
some convex domain that contains the whole data set. Then we use augmented data set (all $N$ points) in (\ref{setlevp}), but
 $n$ in formulas (\ref{cmes}) and (\ref{sabsds}) remains to be  the cardinality of the original data set.

Figure \ref{ring} shows the output of ABCDepth algorithm in the ring example above. By adding artificial data points, we are able to
obtain an approximate position of the Tukey's  median.

As another example, let us consider a triangle  with vertices $A(0, 1)$, $B(-1, 0)$ and $C(1, 0)$.
Assuming that $A,B,C$ are sample points, all points in the interior and on the border of $ABC$ triangle have depth $1/3$,
so the depth reaches its maximum value at $1/3$.
Since the original data set contains only $3$ points, by adding artificial data and applying ABCDepth algorithm we can
visualize  the Tukey's median set as shown in Figure \ref{triangle}.

\begin{figure}[htb]
    \centering
    \includegraphics[width=0.5\textwidth, height=0.4\textheight]{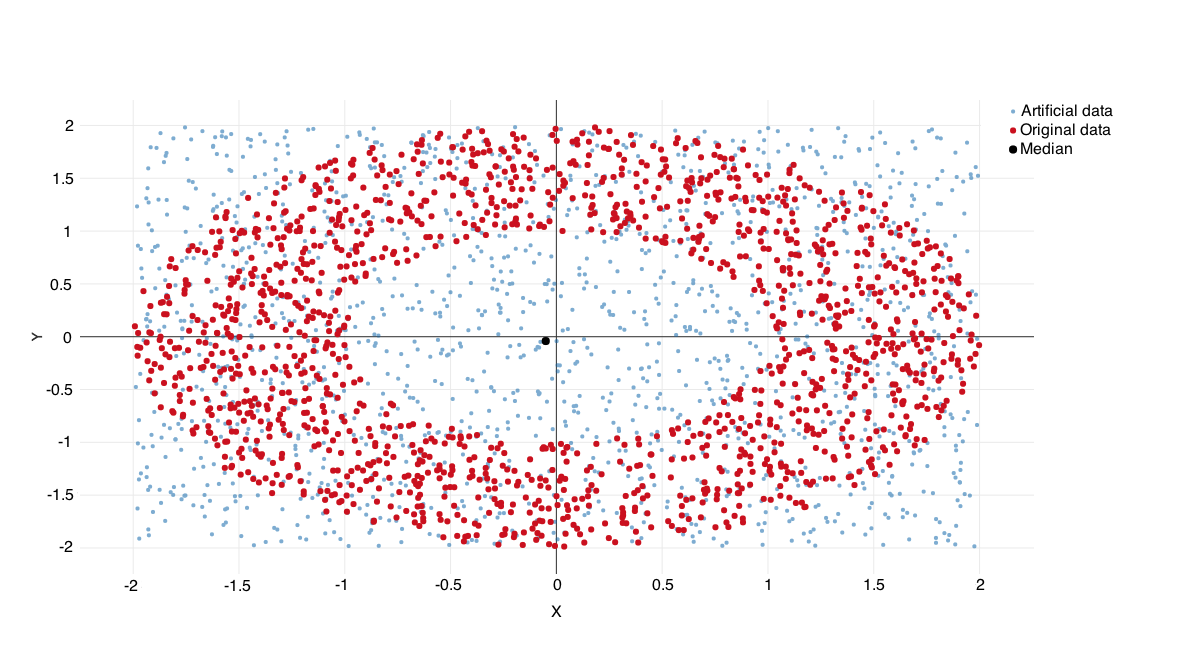}
    \caption{A sample from uniform distribution in a ring (red): Tukey's median (black) found with the aid of artificial points (blue).  }
    \label{ring}
\end{figure}

\begin{figure}[H]
    \centering
    \includegraphics[width=0.5\textwidth, height=0.4\textheight]{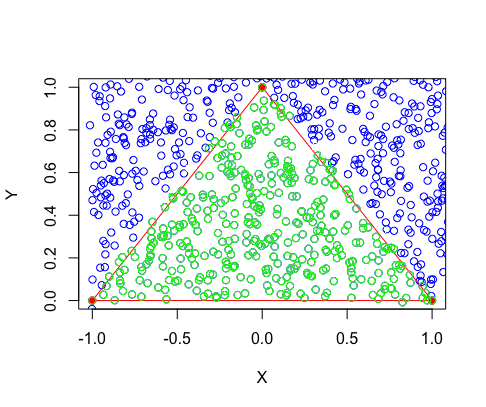}
    \caption{Tukey's median set of red triangle represented as triangle itself and green points inside of the triangle. Blue points
   are artificial data. }
    \label{triangle}
\end{figure}

In the rest of this section, we  describe the details of implementation of the approximate algorithm for finding Tukey's median,
as well as  versions of the same algorithm for finding Tukey depth of a sample point,
the depth of  out-of-sample point,  and for  data contours. For simplicity, in the rest of the paper we
use notation $S_{\alpha}$ instead of $\dhat{S}_{\alpha}$
unless explicitly noted otherwise.

\subsection{Implementation: Algorithm 1 for finding deepest points (Tukey's median)}
\label{subsectm}
\hfill\\

\textbf{Phase 1} In order to execute the calculation in (\ref{sabsds}), constructing balls for intersection is the first step. Each ball is defined by $\lfloor n(1 - \alpha) + 1 \rfloor$ nearest points to its center, so at the beginning of this phase, we calculate
Euclidean inter-distances.
Distances are stored as a triangular matrix in a \textit{list of lists} structure, where $i$-th list $(i = 1, . . . , n-1)$ contains distances $d_{i+1,j} , j = 1,\ldots  , i$. This part of the implementation is described in lines $1-6$ of Algorithm 1.

\textbf{Phase 2} In this phase, ABCDepth sorts distances for each point (center) and populates a \textit{hashmap} structure, where the \textit{key} is a center of a ball, and \textit{value} is a \textit{list} with
$\lfloor n(1 - \alpha) + 1 \rfloor$ nearest points. This part of algorithm is presented in lines $7-10$ of Algorithm 1.

\textbf{Phase 3}  Now, we intersect balls iteratively and in each iteration, $\alpha$ is increased by $\frac{1}{n}$. Since this algorithm is meant to find the deepest location, there is no need to start with the minimal value of $\alpha = \frac{1}{n}$; due to Theorem \ref{alpha}, we set the initial value of  $\alpha$ to be
 $\frac{1}{d+1}$.  Balls intersections are shown on Algorithm 1, lines $11-18$.

If the input set is sparse, ABCDepth optionally creates an augmented data set of total size $N$ as explained on
page \pageref{secdis} and demonstrated on Figures \ref{ring} and  \ref{triangle}. The rest of the algorithm takes three phases which we described above.

The initial version of ABCDepth algorithm was presented in  \cite{bome15}.

\begin{algorithm}[H]
  \LinesNumbered
  \SetAlgoLined
  \KwData{Original data, $X_n = (\boldsymbol{x_1}, \boldsymbol{x_1},...,\boldsymbol{x_n}) \in \mathbb{R}^{d \times n}$}
  \KwResult{List of level sets, $S = \{S_{\alpha_1}, S_{\alpha_2},...,S_{\alpha_m}\}$, where $S_{\alpha_m}$ represents a Tukey median}
  \tcc{Note: $S_{\alpha}$ here means $\dhat{S}_{\alpha}$}
  \BlankLine
  \For{$i \gets 2$ \textbf{to} $n$} {
	\For{$j \gets 1$ \textbf{to} $i-1$} {
		 	Calculate Euclidean distance between point $\boldsymbol{x_i}$ and point $\boldsymbol{x_j}$ \;
			Add distance to the list of lists \;
		
	}
  }
   \BlankLine
   \For{$i \gets 1$ \textbf{to} $n$} {
   	Sort distances for point $\boldsymbol{x_i}$  \;
	Populate structure with balls \;
   }
   \BlankLine
   \tcc{Iteration Phase}
   $size = n$, $\alpha_1=\frac{1}{d+1}$, $k=1$ \;
   \While{$size > 1$} {
   	$S_{\alpha_k} = \{ \bigcap_{j}^{n} B_j,  \left | B_j \right | = \lfloor n(1 - \alpha_k) + 1 \rfloor \}$ \;
	$size = \left | S_{\alpha_k} \right |$ \;
	$\alpha_{k+1}=\alpha_k + \frac{1}{n}$ \;
	Add $S_{\alpha_k}$ to $S$ \;
	$k = k + 1$ \;
   }
  \caption{Calculating Tukey median.}
\end{algorithm}

\subsubsection{Complexity}
\begin{thm}
\label{complexitytm}
ABCDepth algorithm for finding approximate Tukey median has order of $O((d + k)n^2 + n^2\log{n})$ time complexity,
where $k$ is the number of iterations in the iteration phase.
\end{thm}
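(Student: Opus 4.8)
The plan is to charge the running time to the three phases into which Algorithm~1 is organized, bound each separately, and add the three contributions; throughout I work in the basic setting $M=N=n$ of Subsection~\ref{subsectm}, in which no artificial points are introduced.

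First I would handle \textbf{Phase 1} (lines 1--6). The double loop ranges over all unordered pairs $\{i,j\}$ with $1\le j<i\le n$, hence over $\binom{n}{2}=O(n^2)$ pairs. Each pass computes one Euclidean distance between two points of $\mathbb{R}^d$, which costs $\Theta(d)$ arithmetic operations (coordinatewise differences, squares, a sum, a square root), and appends the value to the list-of-lists structure in amortized $O(1)$. So Phase~1 costs $O(dn^2)$. Then \textbf{Phase 2} (lines 7--10): for each of the $n$ centers we comparison-sort its list of $n-1$ recorded distances in $O(n\log n)$, for $O(n^2\log n)$ in total, and then read off the $\lfloor n(1-\alpha)+1\rfloor$ nearest points to fill the hashmap in an $O(n)$ prefix scan per center, i.e.\ $O(n^2)$ overall, which is absorbed. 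Thus Phase~2 is $O(n^2\log n)$.

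The crux, and the step I expect to need the most care, is \textbf{Phase 3} (lines 11--18). The point to make precise is that the iteration introduces no new coordinates: every datum needed to decide whether a candidate point $a$ lies in a ball $B_j$ with centre $c_j$ is already tabulated. The distance $\|a-c_j\|$ is an entry of the triangular matrix built in Phase~1, and the current radius of $B_j$ in iteration $k$ — the $\lfloor n(1-\alpha_k)+1\rfloor$-th smallest distance from $c_j$ — is an entry of the list sorted in Phase~2. Hence each membership test is an $O(1)$ lookup-and-compare, not an $O(d)$ operation. In one iteration we test at most $n$ candidate points against at most $n$ balls and then recompute $|\dhat{S}_{\alpha_k}|$ and append the set to the output, all in $O(n^2)$ time; summing over the $k$ iterations of the \texttt{while} loop gives $O(kn^2)$. (Using the nesting $\dhat{S}_{\alpha_k}\subseteq\dhat{S}_{\alpha_{k-1}}$ one could re-test only the survivors of the previous iteration, but this does not change the worst-case bound.)

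Finally I would add the three estimates, $O(dn^2)+O(n^2\log n)+O(kn^2)=O\big((d+k)n^2+n^2\log n\big)$, which is the assertion. The only genuine subtlety is the bookkeeping argument in Phase~3 showing that the per-iteration cost is $O(n^2)$ rather than $O(dn^2)$; the remaining estimates are routine counting of nested-loop iterations.
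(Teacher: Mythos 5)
Your proof is correct and follows essentially the same three-phase decomposition as the paper, with identical bounds $O(dn^2)$, $O(n^2\log n)$, and $O(kn^2)$ for the respective phases. The only (minor) difference is in Phase 3, where you justify the $O(n^2)$ per-iteration cost by $O(1)$ lookup-and-compare against precomputed sorted distances, while the paper charges $O(n)$ per ball to its hash-based set structure; both yield the same bound.
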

\begin{proof}
To prove this theorem we use the pseudocode of  Algorithm 1.
Lines 1-6 calculate Euclidean inter-distances of points. The first \textit{for} loop (line 1) takes all $n$ points, so its complexity is $O(n)$.
Since there is no need to calculate $d(x_i, x_i)$ or $d(x_j, x_i)$ if it is already calculated, the second for-loop (line 2) runs in
 $O(\frac{n-1}{2})$ time. Finally, calculation of Euclidean distance takes $O(d)$ time. The overall complexity for lines 1-6 is:
\begin{equation}
\label{one}
O(\frac{nd(n-1)}{2}) \sim O(dn^2)
\end{equation}

Iterating through the \textit{list of lists} obtained in lines 1-6, the first \textit{for} loop (line 7) runs in $O(n)$ time.
For sorting the distances per each point, we use \textit{quicksort} algorithm that takes $O(n \log n)$ comparisons to sort $n$ points
\cite{hoare61}. Structure populating takes $O(1)$ time. Hence, this part of the algorithm has complexity of:
\begin{equation}
\label{two}
O(n^2 \log n).
\end{equation}

In the last phase (lines 11-18), algorithm calculates level sets by intersecting balls constructed in the previous steps.
In every iteration (line 12), all $n$ balls that contain $\lfloor n(1 - \alpha_k) + 1 \rfloor$ are intersected (line 13).
The parameter $k$ can be considered as a number of iterations, i.e. it counts how many times the algorithm enters in \textit{while} loop.
Each intersection has the complexity of $O(\lfloor n(1 - \alpha_k) + 1 \rfloor) \sim O(n)$
 due to the property of the hash-based data structure we use (see for example \cite{fastset}). We can conclude that the
 iteration phase has complexity of:
\begin{equation}
\label{three}
O(kn^2).
\end{equation}

From (\ref{one}), (\ref{two}), and (\ref{three}),
\begin{equation}
\label{alg1}
O(dn^2) + O(n^2 \log n) + O(kn^2) \sim O((d + k)n^2 + n^2 \log n),
\end{equation}
which ends the proof. \end{proof}

\begin{rem}{\rm \label{kmaxgp} $1^{\circ}$  From the relations between $S_{\alpha }$, $\hat{S}_{\alpha}$,
and $\dhat{S}_{\alpha}$ (in notations as in \ref{saver}, page \pageref{saver}), it follows that the maximal approximative depth for a given point
can not be greater  than its exact  depth.

$2^{\circ}$ Under the assumption that data points are in the general position, the exact sample maximal depth is $\alpha_m=\frac{m}{n}$, where $m$ is
not greater than $\lceil\frac{n}{2}\rceil$ (see \cite[Proposition 2.3]{doga92}), and so by remark $1^{\circ}$,
the number $k$ of steps satisfies the inequality

\begin{equation}
\frac{k-1}{n}\leq  \frac{n+1}{2n} - \frac{1}{d+1},
\end{equation}
and  the asymptotical upper bound for $k$ is $\frac{n}{2}$.}

\end{rem}

\begin{rem} \label{augmtm}{\rm  In the case when we add artificial data points to the original data set,
$n$ in (\ref{alg1}) should be replaced with $N$, where $N$ is the cardinality of the augmented data set.
The upper bound for $k$ in (\ref{alg1}) remains the same. }
\end{rem}

The rates of complexity with respect to $n$ and $d$ of Theorem \ref{complexitytm} are confirmed by simulation results presented in Figures
\ref{fig:data_time2} and  \ref{fig:dim_time1}. Measurements are taken on simulated samples of size $n$ from $d$-dimensional
$\mathcal N (0,I)$ distribution, where $I$ is the unit $d\times d$ matrix,   with $d =2,...,10$ and
$ n =$ 40, 80, 160, 320, 640, 1280, 2560, 3000, 3500, 4000, 4500, 5000, 5500, 6000, 6500, 7000.
 The results are averaged on $10$ repetitions for each fixed pair $(d,n)$.

\begin{figure}[!htb]
    \centering

    \includegraphics[width=0.7\textwidth, height=0.4\textheight]{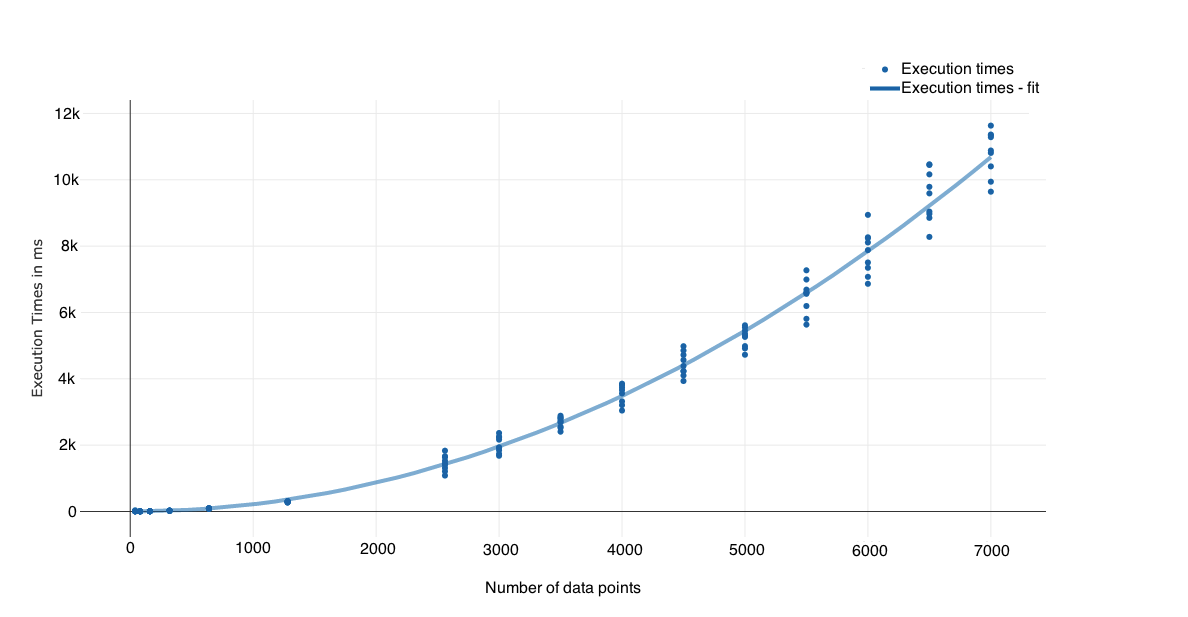}
    \caption{When number of points increases, the execution time grows with the order of $n^2\log n$.}
        \label{fig:data_time2}
\end{figure}

\begin{figure}[!htb]
    \centering
    \includegraphics[width=0.7\textwidth, height=0.4\textheight]{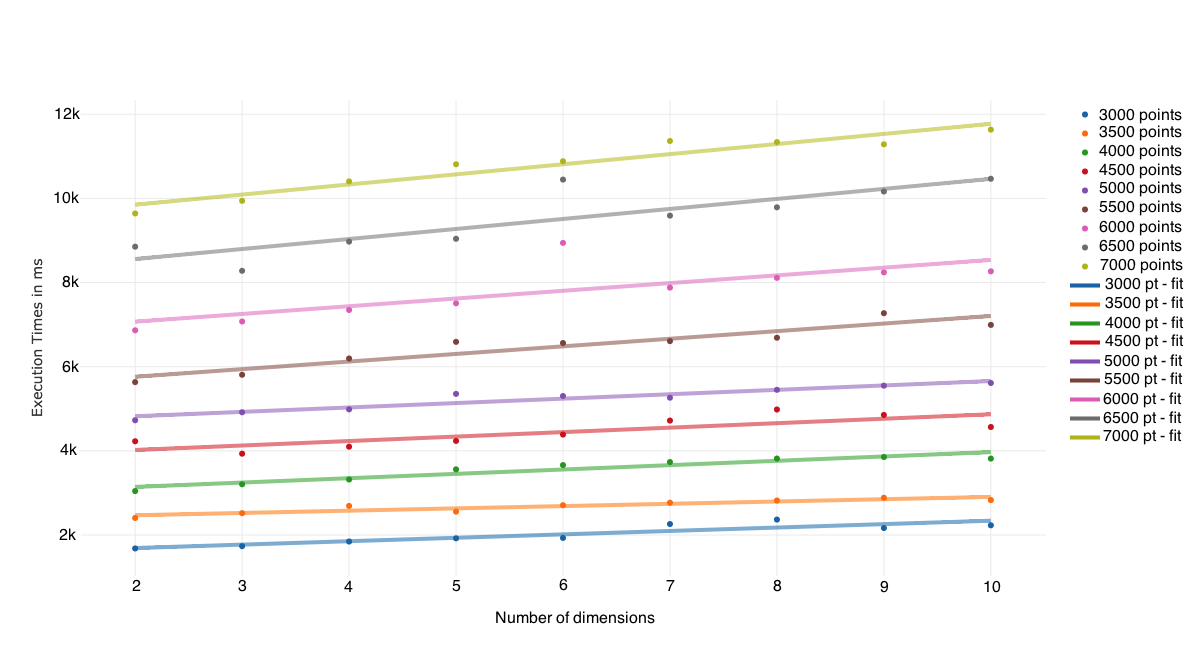}
    \caption{The execution time grows linearly with dimensionality. }
    \label{fig:dim_time1}
\end{figure}

For assessing the accuracy of the median approximation, we also use simulated data sets from $\mathcal N (0,I)$. Let $\hat{m}$ be an approximate median
obtained as the output of the algorithm. Knowing that the median of $\mathcal N (0,I)$
is at origin, and that the squared distance to origin has $\chi^2(d)$ distribution, it is convenient to take $p$-values
$P(\chi^2(d) \leq \left \| \hat{m} \right \| ^2)$ as a measure of error. For samples of size $n=1000$  in ten selected dimensions, the
measurements are performed using ABCDEPTH and DEEPLOC on each sample. The results  summarized in Table \ref{chisquared}, and graphically presented in
figure \ref{p_value}, show the superiority of our algorithm and advantage that increases with dimensionality.

\begin{table}[H]
\begin{flushleft}
\caption{Error size of approximate median points in terms of $p$-values $P(\chi^2(d)\leq \left \| \hat{m} \right \| ^2)$}
\label{chisquared}
\resizebox{0.7\textwidth}{!}{\begin{minipage}{\textwidth}
\begin{tabular}{llllllllllll}
\hline
\multirow{2}{*}{n} & \multirow{2}{*}{Algorithm}                                 & \multicolumn{10}{c}{d}                                                                                                                                                                                                                                                                                                                                                                                                                                                                                                                                                                           \\ \cline{3-12}
                   &                                                            & 4                                                       & 5                                                       & 6                                                       & 7                                                       & 8                                                       & 9                                                       & 10                                                      & 20                                                      & 100                                                     & 200                                                    \\ \hline
1000               & \begin{tabular}[c]{@{}l@{}}DEEPLOC\\ ABCDEPTH\end{tabular} & \begin{tabular}[c]{@{}l@{}}0.0021\\ 0.0011\end{tabular} & \begin{tabular}[c]{@{}l@{}}0.0027\\ 0.0016\end{tabular} & \begin{tabular}[c]{@{}l@{}}0.0039\\ 0.0012\end{tabular} & \begin{tabular}[c]{@{}l@{}}0.0041\\ 0.0011\end{tabular} & \begin{tabular}[c]{@{}l@{}}0.0045\\ 0.0015\end{tabular} & \begin{tabular}[c]{@{}l@{}}0.0051\\ 0.0015\end{tabular} & \begin{tabular}[c]{@{}l@{}}0.0056\\ 0.0012\end{tabular} & \begin{tabular}[c]{@{}l@{}}0.0056\\ 0.0012\end{tabular} & \begin{tabular}[c]{@{}l@{}}0.0612\\ 0.0022\end{tabular} & \begin{tabular}[c]{@{}l@{}}0.1234\\ 0.002\end{tabular} \\ \hline
\end{tabular}
\end{minipage}}
\end{flushleft}
\end{table}

\begin{figure}[H]
  \includegraphics[width=0.7\textwidth]{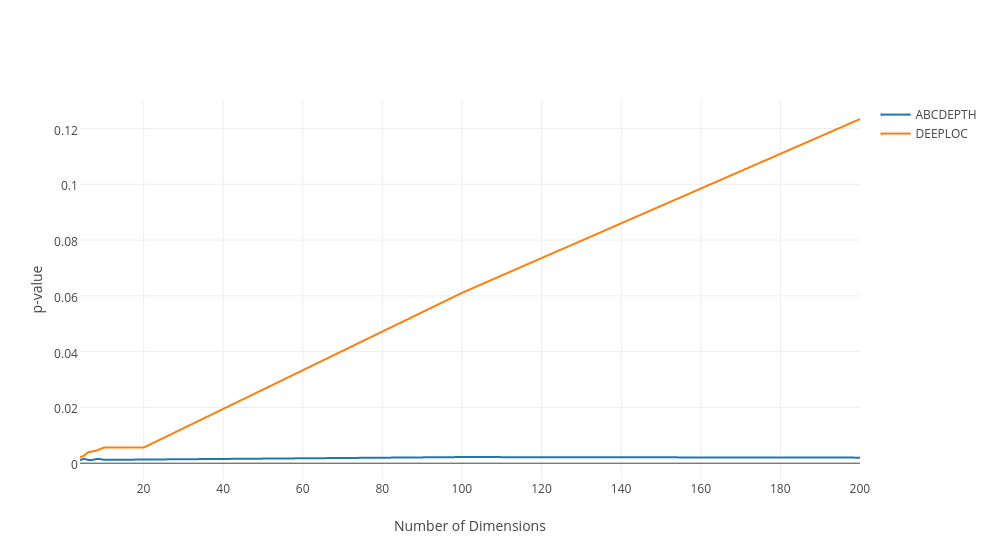}
  \caption{Plotted values from Table \ref{chisquared} }
\label{p_value}
\end{figure}

\subsubsection{Examples}
\label{secexamplestm}


The first simple  example considers $n$ points in dimension $1$ generated from normal $\mathcal N(0,1)$ distribution.
By running ABCDepth in this case with $n=1000$, we got two points (as expected) in the median level set,
$S_{0.5}=\{-0.00314, 0.00034\}$. With another sample with  $n=1001$ (odd number)  from
 the same distribution, the median  set was a singleton, $S_{0.5} = \{0.0043\}$

Now, we demonstrate data sets generated from bivariate and multivariate normal distribution.
Figure \ref{1000_2}  and Figure \ref{1000_3} show the median calculated from $1000$ points in
 dimension $2$ and $3$, respectively from normal $\mathcal N(0,I)$ distribution. Starting from $\alpha=\frac{1}{d+1}$ (see Theorem \ref{alpha})
 the algorithm produces $\sim200$ levels sets for $d=2$, and $\sim300$ level sets for $d=3$, so not all of them are plotted.
 On both figures the median is represented as a black point with depth $\frac{499}{1000}$ on Figure \ref{1000_2}, i.e. $\frac{493}{1000}$
 on Figure \ref{1000_3}.
\begin{figure}[!htb]
    \centering
    \includegraphics[width=0.7\textwidth, height=0.4\textheight]{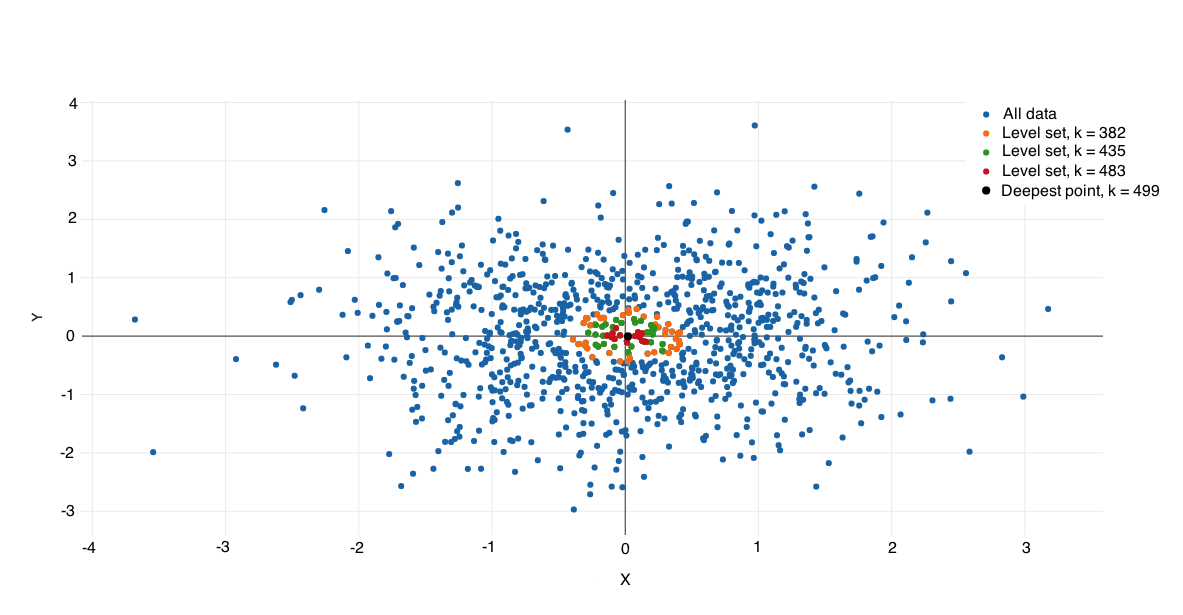}
    \caption{Bivariate normal distribution - four level sets, where the black point at the center is the deepest point.}
    \label{1000_2}
\end{figure}

 \begin{figure}[!htb]
    \centering
    \includegraphics[width=0.7\textwidth]{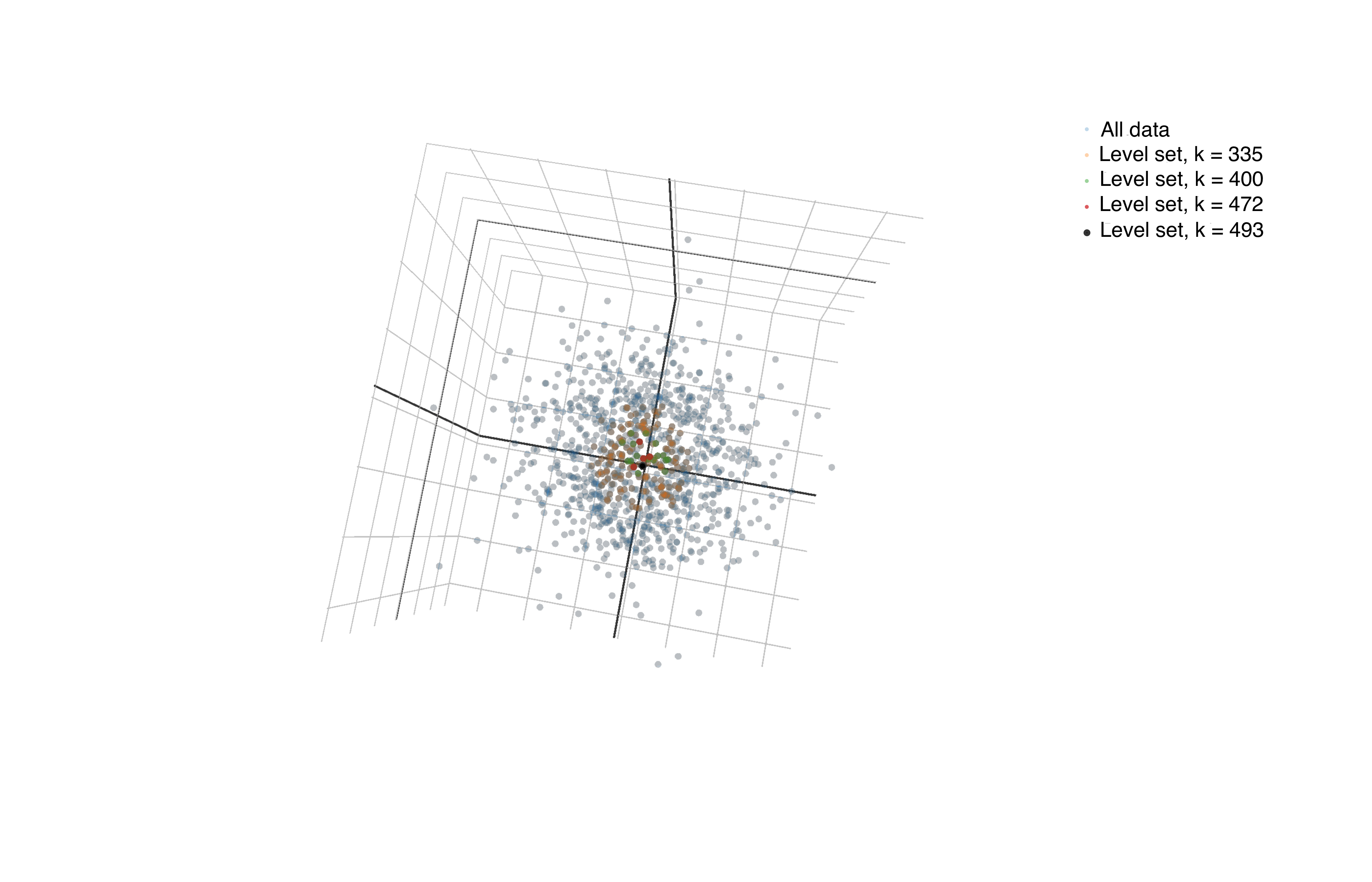}
    \caption{3D normal distribution - four level sets, where the black point at the center is the deepest point.}
    \label{1000_3}
\end{figure}

All data generators that we use in this paper in order to verify and plot the algorithm output were presented at
\cite{bome14}, and they are available within  an open source project at \url{https://bitbucket.org/antomripmuk/generators}.

As a real data example, we take a data set which is rather sparse. The data set is taken from \cite{struro00}, and it has been used in several
other papers as a benchmark.  It contains 23 four-dimensional observations in period from 1966 to 1967 that represent seasonally adjusted
changes in auto thefts in New York city. For the sake of clarity, we take only two dimensions:
percent changes in manpower, and seasonally adjusted changes in auto thefts.
The data is downloaded from \url{http://lib.stat.cmu.edu/DASL/Datafiles/nycrimedat.html}.
Figure \ref{single_ny} shows the output of ABCDepth algorithm if we consider only points from the sample (orange point).
Obviously, the approximate median belongs to the original data set.
Then, we run ABCDepth algorithm  with  $1000$ artificial data points from the uniform distribution as  explained in
Section \ref{sect} and earlier in this section.
The approximate median obtained by this run (green point) has the same depth of $\frac{9}{23}$ as the median
calculated using DEEPLOC algorithm \cite{struro00} by running their Fortran code (red point). We check depths of those two points
(green and red) applying \textit{depth} function based on \cite{struyf98} and implemented in R "depth" package \cite{depthr}.
Evidently, the median, in this case, is not a singleton, i.e. there is more than one point with depth $\frac{9}{23}$.
By adding more than $1000$ artificial points, we can get more than one median point. We will discuss this example again in subsection \ref{exampledepth}.

 \begin{figure}[!htb]
    \centering
    \includegraphics[width=0.7\textwidth]{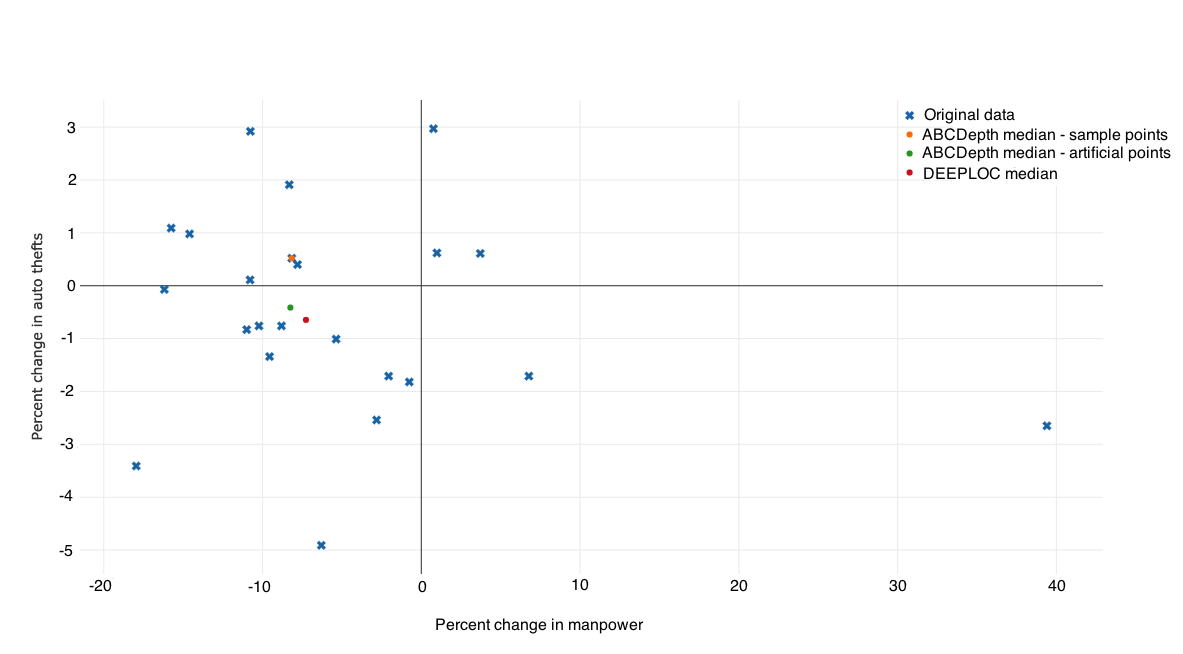}
    \caption{NY crime data set, comparison of Tukey medians using ABCDepth and DEEPLOC.}
    \label{single_ny}
\end{figure}

Another two examples are chosen from \cite{rouru96b}. Figure \ref{artificial_animals} shows $27$ two-dimensional observations
that represent animals brain weight (in g) and the body weight (in kg) taken from \cite{animals}. In order to represent the same
data values, we plotted the logarithms of those measurements as done in \cite{rouru96b}.

Figure \ref{artificial_aircraft} considers the weight and the cost of $23$ single-engine aircraft built between $1947-1979$.
This data set is taken from \cite{aircraft}.

As  in Figure \ref{single_ny}, in those two figures the orange point is the median obtained by running ABCDepth algorithm
using only sample data. Green and red points represent outputs of ABCDepth algorithm applied by adding $1000$ artificial data points from the uniform
 distribution and DEEPLOC median, respectively. These two examples show the importance of out-of-sample points in finding the depth levels and
 Tukey's median.

  \begin{figure}[H]
    \centering
    \includegraphics[width=0.7\textwidth]{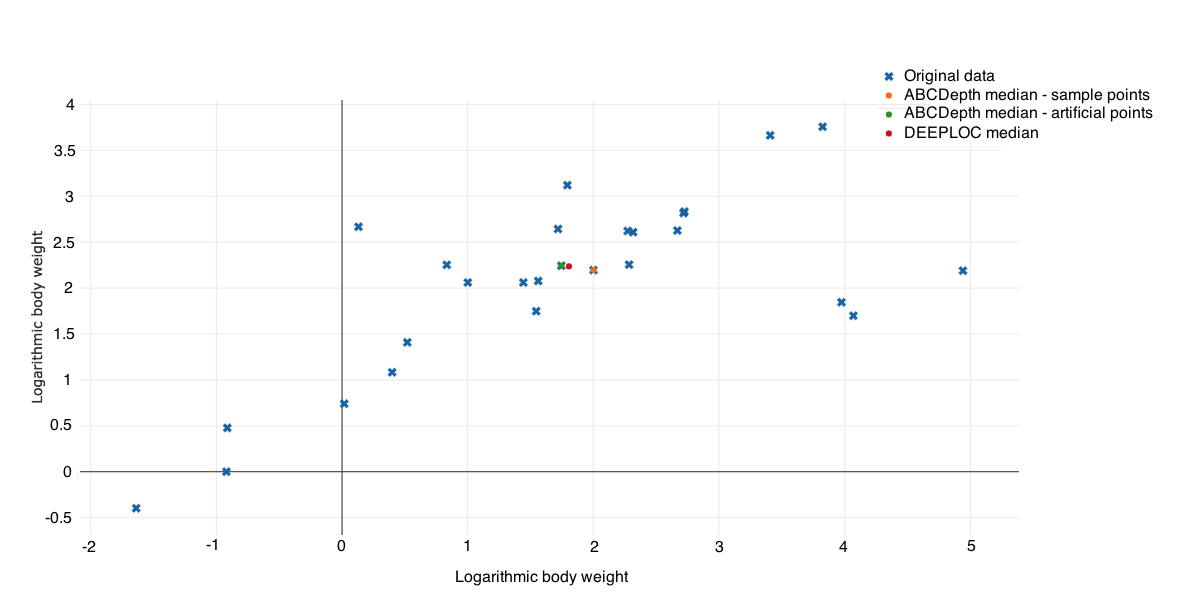}
    \caption{Animals data set, comparison of Tukey medians using ABCDepth and DEEPLOC.}
    \label{artificial_animals}
\end{figure}

 \begin{figure}[H]
    \centering
    \includegraphics[width=0.7\textwidth]{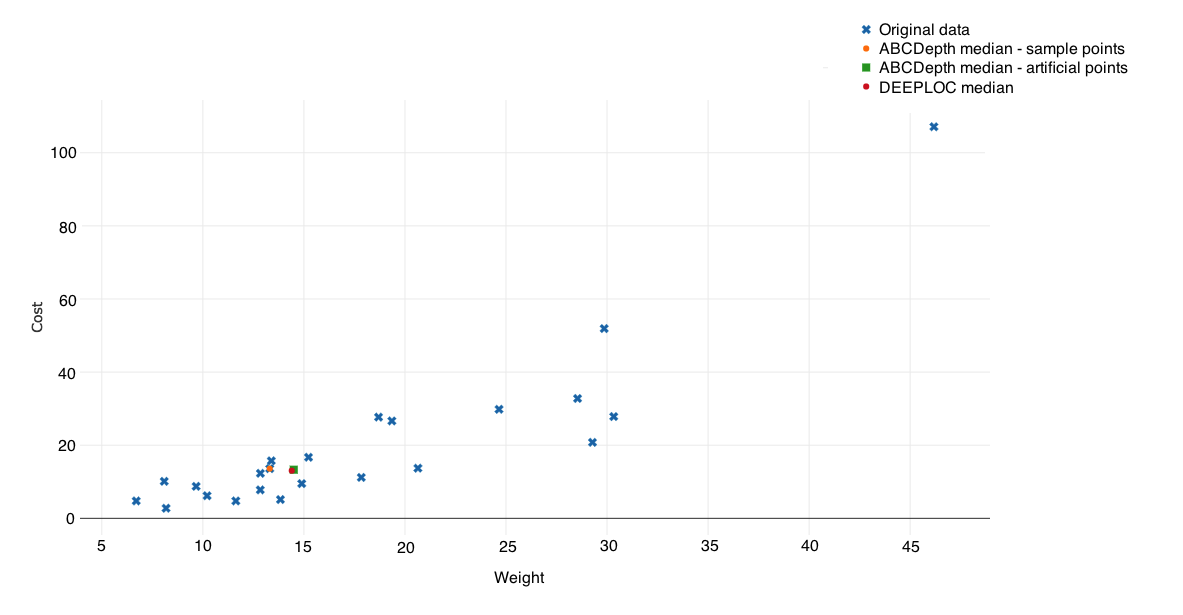}
    \caption{Aircraft data set, comparison of Tukey medians using ABCDepth and DEEPLOC.}
    \label{artificial_aircraft}
\end{figure}

\subsection{Adapted Implementation: Algorithm 2 for finding the Tukey's depth of a sample point and out-of-sample point}
\label{subsectd}

Let us recall that by  Corollary \ref{deviale}, a point $\boldsymbol x$ has depth $h$ if and only if $\boldsymbol x \in S_{\alpha}$ for $\alpha \leq h$
and $\boldsymbol x \not \in S_{\alpha} $ for $\alpha > h$. With a sample of size $n$, we can consider only $\alpha =\frac{k}{n}$, $k=1,\ldots, n$, because
for $\frac{k-1}{n} <\alpha< \frac{k}{n}$, we have that $D(\boldsymbol x)\geq \alpha \iff D(\boldsymbol x)\geq \frac{k}{n}$. Therefore, the statement of Corollary
\ref{deviale} adapted to the sample distribution can be formulated as (using the fact that $S_{\beta} \subset S_{\alpha}$ for $\alpha <\beta$):
\begin{equation}
\label{deviales}
D(\boldsymbol x)= \frac{k}{n} \iff \boldsymbol x \in S_{\frac{k}{n}} \quad {\rm and} \quad \boldsymbol x \not\in S_{\frac{k+1}{n}} .
\end{equation}

From (\ref{deviales}) we derive the algorithm for Tukey's depth of a sample point $\boldsymbol x$ as follows. Let $\alpha_k = \frac{k}{n}$. The level set
 $S_{\alpha_1}$
contains all points in the sample. Then we construct $S_{\alpha_2}$ as an intersection of $n$ balls that contain $n-1$ sample points.
If $\boldsymbol x \not \in S_{\alpha_2}$, we conclude that $D(\boldsymbol x)=1/n$, and stop. Otherwise, we iterate this procedure till we get the situation as in
right side of (\ref{deviales}), when we conclude that the depth is $\frac{k}{n}$. The output of the algorithm is $k$.

\begin{rem}{\rm As in Remark \ref{kmaxgp}, it can be shown that the approximate depth $k/n$ is never greater than the true depth.

}
\end{rem}

Implementation-wise, in order to improve the algorithm complexity, we do not need to construct the level sets.
It is enough to count balls that contain point $\boldsymbol x$. The algorithm stops when for some $k$, there
exists at least one ball (among the candidates for the intersection)  that does not contain $\boldsymbol x$.
Thus, the depth of the point $\boldsymbol x$ is $k-1$.

With a very small modification, the same algorithm can be applied to a point $\boldsymbol x$ out of the sample.
We can just treat $\boldsymbol x$ as an artificial point,
in the same way as in previous sections. That is, the size of the required balls has to be $n-k+1$ points from the sample,
not counting $\boldsymbol x$. The rest of
the algorithm is the same as in the case of a sample point $\boldsymbol x$.

In both versions (sample or out-of-sample), we can use additional artificial points to increase the precision. The sample version of the algorithm is
detailed below.

\begin{algorithm}[H]
  \LinesNumbered
  \SetAlgoLined
  \KwData{Original data, $X_n = (\boldsymbol{x_1}, \boldsymbol{x_1},...,\boldsymbol{x_n}) \in \mathbb{R}^{d \times n}$, $\boldsymbol x=\boldsymbol x_i$ for a fixed $i$ - the data point whose depth is calculated.}
  \KwResult{Tukey depth at $\boldsymbol x$.}
  \BlankLine
   \tcc{Iteration Phase}
   $S_{\alpha_1} = \{\boldsymbol x_1,\ldots,\boldsymbol x_n\}$\;
   \For {$k \gets 2$ \textbf{to} $n$} {

  	$p=0$ - Number of balls that contain $\boldsymbol x$. Its initial value is $0$ \;
    	\tcc{Find balls that contain point $\boldsymbol x$}
   	\For {$i \gets 1$ \textbf{to} $n$} {
   		\If {$\boldsymbol x \in B_i$, where $B_i$ contains $n - k + 1$ original data points} {
			$p=p+1$;
		}
   	}
	\If {$p \neq n$} {
		\Return $k-1$	
	}

   }
  \caption{Calculating Tukey depth of a sample point.}
\end{algorithm}

\subsubsection{Complexity}
\begin{thm}
\label{complexitytd}
Adapted ABCDepth algorithm for finding approximate Tukey depth of a sample point has order of $O(dn^2 + n^2\log{n})$ time complexity.
\end{thm}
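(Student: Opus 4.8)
The plan is to mirror the proof of Theorem \ref{complexitytm}, isolating the three phases of the computation and showing that replacing the ball--intersection step by a ball--counting step removes the factor $k$ from the iteration phase, which is the only structural difference between the two algorithms.

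First I would observe that Algorithm 2 silently reuses the preprocessing of Algorithm 1: in order to know each candidate ball $B_i$ --- in particular its radius, which by construction is the $(n-k+1)$-th smallest distance from the center $c_i=\boldsymbol x_i$ to the sample --- one must first compute all pairwise Euclidean inter-distances and then, for each center, sort the list of distances to the remaining sample points. As already established in the proof of Theorem \ref{complexitytm}, the distance computations cost $O(dn^2)$ (lines 1--6 of Algorithm 1) and the $n$ quicksorts cost $O(n^2\log n)$ (lines 7--10). These two bounds contribute the summands $O(dn^2)$ and $O(n^2\log n)$ to the total.

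Next I would analyze the iteration phase displayed in Algorithm 2. The outer loop runs over $k=2,\ldots,n$, hence at most $n-1$ times; the early return triggered by $p\neq n$ can only terminate it sooner, so it does not affect the worst-case bound. Within one pass, the inner loop examines the $n$ candidate balls $B_i$ and, for each, performs the membership test $\boldsymbol x\in B_i$. Because $\boldsymbol x=\boldsymbol x_i$ is a sample point, the distance $\|\boldsymbol x-c_i\|$ is already an entry of the precomputed triangular distance matrix, and the radius of $B_i$ is the appropriate entry of the already-sorted distance list of $c_i$; both are retrieved in $O(1)$ time from the hash-based structure (see \cite{fastset}), so each membership test is $O(1)$. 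Consequently one pass of the outer loop costs $O(n)$ and the entire iteration phase costs $O(n^2)$. Adding the three contributions, $O(dn^2)+O(n^2\log n)+O(n^2)\sim O(dn^2+n^2\log n)$, which is the claimed bound.

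The one point that deserves care --- and which I expect to be the main (though minor) obstacle --- is the justification that the membership test $\boldsymbol x\in B_i$ is genuinely $O(1)$ rather than $O(d)$: this rests on (i) the radius of $B_i$ being available as an order statistic already materialized by the sorting phase, and (ii) $\boldsymbol x$ being a sample point, so that no fresh $O(d)$ distance evaluation is needed. For completeness one may note the out-of-sample variant costs the same, since treating $\boldsymbol x$ as an artificial point requires computing its $n$ distances to the sample once (an extra $O(dn)$, absorbed into $O(dn^2)$) and sorting them once (an extra $O(n\log n)$, absorbed into $O(n^2\log n)$).
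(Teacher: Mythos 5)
Your proof is correct and takes essentially the same route as the paper's: reuse the $O(dn^2)+O(n^2\log n)$ bound for the ball-construction phases from Theorem \ref{complexitytm}, then observe that the counting loop is dominated by those terms (the paper bounds it as $O(kn)$ for the depth $k$ actually found, while you use the worst case $O(n^2)$ --- both are absorbed). Your extra care about the $O(1)$ membership test and the out-of-sample variant only makes explicit what the paper leaves implicit.
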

\begin{proof}
Balls construction for Algorithm 2 is the same as in Algorithm 1 (lines 1-10), so by Theorem \ref{complexitytm}
this part runs in  $O(dn^2 + n^2 \log n)$ time.
For the point with the depth $\alpha_k$, algorithm enters in iteration loop $k$ times and it iterates through all $n$ points to
find the balls that contain point $\boldsymbol x$, so the whole iteration phase runs in $O(kn)$ time.

Overall complexity of the  Algorithm 2 is:
\begin{equation}
\label{alg2}
O(dn^2 + n^2 \log n) + O(kn) \sim O(dn^2 + n^2 \log n),
\end{equation}
which had to be proved. \end{proof}

\begin{rem}  { \rm When the input data set is sparse or when the sample set is small, we add artificial data to the original data set in order to
improve the algorithm accuracy. In that case, $n$ in (\ref{alg2}) should be replaced with $N$. }
\end{rem}

\subsubsection{Examples}
\label{exampledepth}

To illustrate the output for the Algorithm 2, we use the same real data sets as we used in Figures \ref{single_ny}-\ref{artificial_aircraft}.
For all data sets we applied Algorithm 2 in two runs; first time with sample points only, and second time with additional
 $1000$ artificial points generated from uniform distribution. Points' depths are verified using \textit{depth} function from \cite{struyf98}
 implemented in \cite{depthr}. For each data set, we calculate the accuracy as $\frac{100k}{n}\%$, where $n$ is the sample size and $k$ is the
 number of points that has the
 correct depth compared with algorithm presented in \cite{struyf98}.

In Figure \ref{ny_depths_sample} we showed NY crime points depths with accuracy of $26\%$, but if we add more points to
the original data set as we showed on Figure \ref{ny_depths_artificial}, the accuracy is greatly improved, to $87\%$.

\begin{figure}[H]
\centering
\centering
\includegraphics[width=0.7\textwidth]{nycrime_depths_1.pdf}
\caption{NY crime data - point depths using only original data.}
\label{ny_depths_sample}
\end{figure}
 \vspace{-4.5cm}
\begin{figure}[H]
\centering
\includegraphics[width=0.7\textwidth]{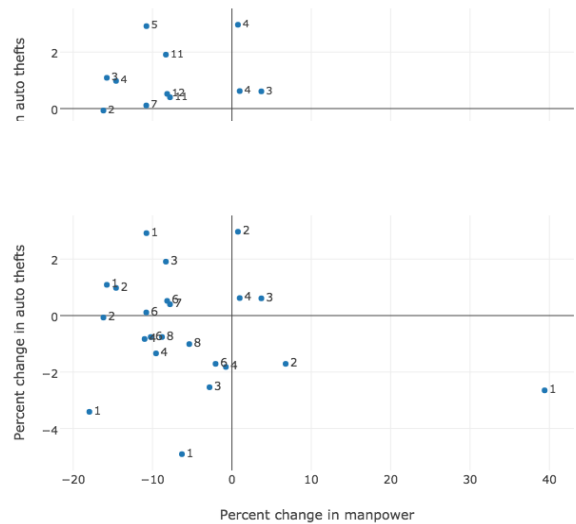}
\caption{NY crime data - point depths using original  and artificial data.}
\label{ny_depths_artificial}
\end{figure}

Figure \ref{animals_depth_sample} shows the same accuracy
of $26\%$ for animals data set, in the case when Algorithm 2 is run with sample points only.
 By adding more points as in Figure \ref{animals_depth_artificial}, the accuracy is improved to $100\%$.

\begin{figure}[H]
\centering
\centering
\includegraphics[width=0.7\textwidth]{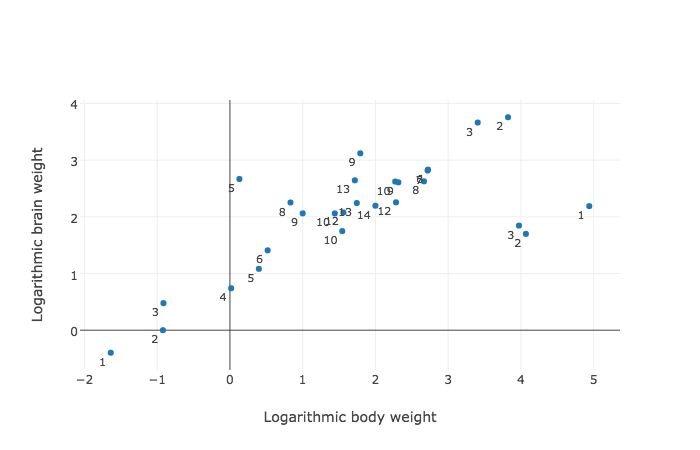}
\caption{Animals data - point depths using only original data.}
\label{animals_depth_sample}
\end{figure}
\begin{figure}[H]
\centering
\includegraphics[width=0.7\textwidth]{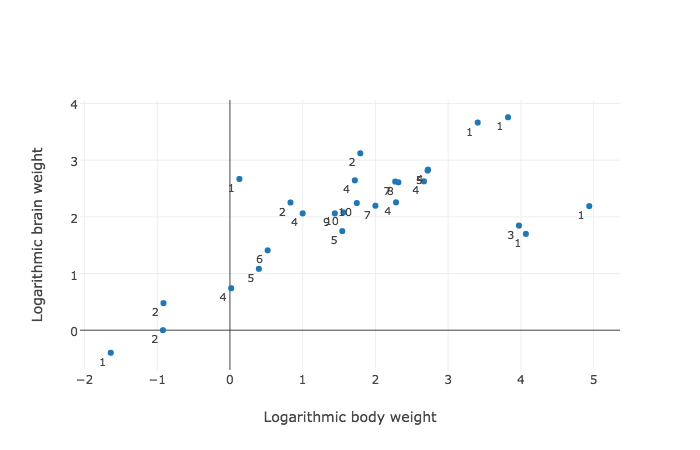}
\caption{Animals data -  point depths using  original and artificial data.}
\label{animals_depth_artificial}
\end{figure}

The third example is aircraft data set presented in Figure \ref{aircraft_depth_sample} and Figure \ref{aircraft_depth_artificial}.
The accuracy with artificial points is $95\%$, otherwise it is  $18\%$.

\begin{figure}[H]
\centering
\centering
\includegraphics[width=0.7\textwidth]{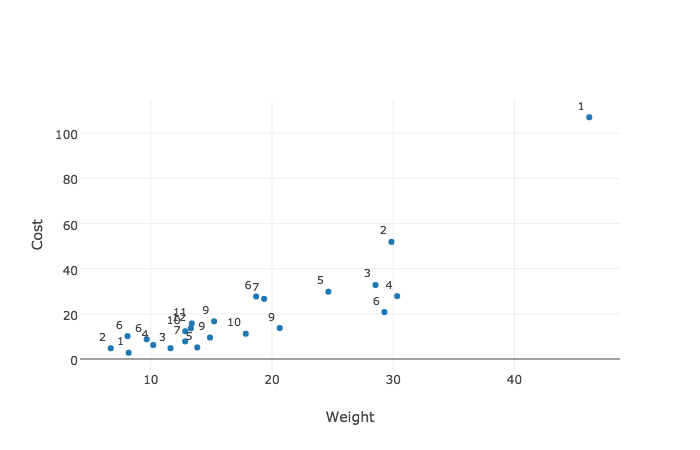}
\caption{Aircraft data - point depths using only original data.}
\label{aircraft_depth_sample}
\end{figure}
\begin{figure}[H]
\centering
\includegraphics[width=0.7\textwidth]{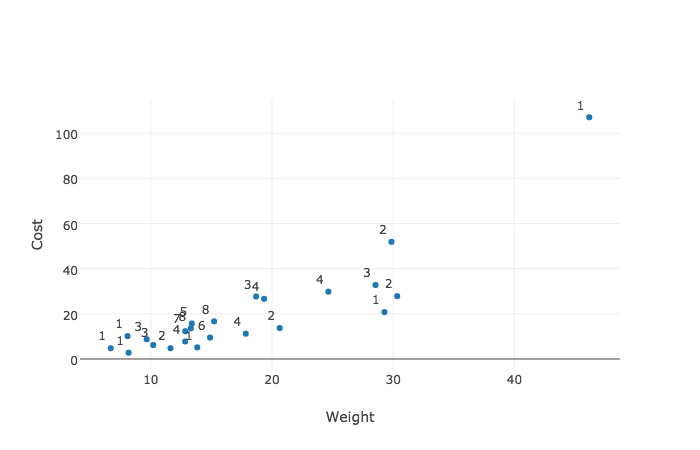}
\caption{Aircraft data - point depths using original  and artificial data.}
\label{aircraft_depth_artificial}
\end{figure}

As the last example of this section, we would like to calculate depths of the points plotted on Figure
\ref {single_ny} using ABCDepth Algorithm 2.  In Figure \ref {single_ny} we plotted Tukey median for NY crime data
set using Algorithm 1 with artificial data points (green point) and compared the result with the median obtained by
DEEPLOC (red point). Both points are out of the sample. In Figure \ref{ny_medians_depths}, we show depths of all sample points
including the depths of two median points, all attained by ABCDepth Algorithm 2.
Algorithm presented in \cite{struyf98} and ABCDepth Algorithm 2 calculate the same depth value for both median points.

\begin{figure} [H]
\centering
\includegraphics[width=0.7\textwidth]{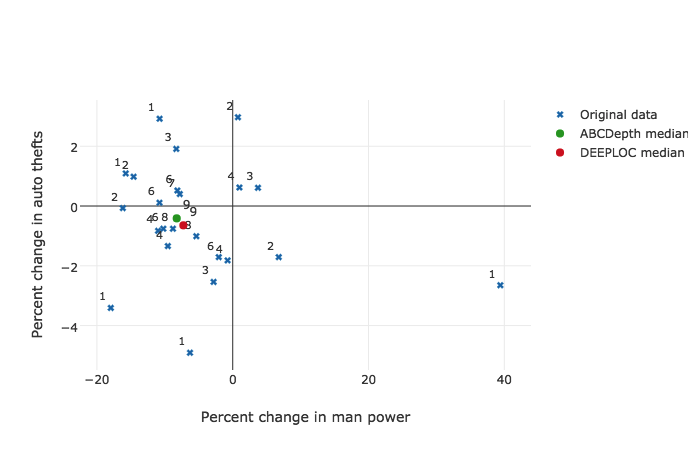}
\caption{NY crime data - point depths using  original  and artificial data.}
\label{ny_medians_depths}
\end{figure}


\section{Performance and Comparisons}
\label{secc}

According to Theorem \ref{complexitytm}, the complexity of calculating Tukey median grows linearly with the dimension and in terms
of a number of data points,
it grows with the order of $n^2\log n$.
Rousseeuw and Ruts in \cite{rouru98} pioneered with an exact algorithm called HALFMED for Tukey median in two dimensions
that runs in $O(n^2 log^2{n})$ time. This algorithm is better than ABCDepth for $d=2$, but it processes only bivariate data sets.
Struyf and Rousseeuw in \cite{struro00} implemented the first approximate algorithm called DEEPLOC for finding the deepest location
in higher dimensions. Its complexity is $O(kmn \log{n} +kdn+md^3+mdn)$ time, where $k$ is the number of steps taken by the program and
$m$ is the number of directions, i.e. vectors constructed by the program. This algorithm is very efficient for low-dimensional data sets,
but for high-dimensional data sets ABCDepth algorithm outperforms DEEPLOC.
Chan in \cite{chan04} presents an approximate randomized algorithm for maximum Tukey depth. It runs in $O(n^{d-1})$ time but it has
not been implemented yet.

In Table 2 execution times of DEEPLOC algorithm and ABCDepth algorithm for finding Tukey median are reported.
The measurements are performed using synthetic data generated from the multivariate normal $\mathcal N(0,1)$ distribution.
In this table, we demonstrate how ABCDepth algorithm behaves with thousands of high-dimensional data points. It takes
$\sim 13$ minutes for $n=7000$ and $d=2000$. Since DEEPLOC algorithm does not support data sets with $d > n$ and returns the error message:
"the dimension should be at most the number of objects", we denoted those examples with $-$ sign in the table.
The sign $*$ means that the median is not computable at least once in $12$ hours.

\begin{table}[H]
\begin{flushleft}
\caption{Comparison between DEEPLOC and ABCDepth execution times in seconds.}
\label{deepABC}
\resizebox{0.58\textwidth}{!}{\begin{minipage}{\textwidth}
\begin{tabular}{@{}lllllllllllllll@{}}
\toprule
\multirow{2}{*}{d} & \multirow{2}{*}{Algorithm}                                 & \multicolumn{13}{c}{n}                                                                                                                                                                                                                                                                                                                                                                                                                                                                                                                                                                                                                                                                                                                                                    \\ \cmidrule(l){3-15}
                   &                                                            & 320                                                  & 640                                                     & 1280                                                 & 2560                                                 & 3000                                                 & 3500                                                   & 4000                                                  & 4500                                                   & 5000                                                  & 5500                                                   & 6000                                                   & 6500                                                   & 7000                                                                   \\ \midrule
50                 & \begin{tabular}[c]{@{}l@{}}Deeploc\\ ABCDepth\end{tabular} & \begin{tabular}[c]{@{}l@{}}4.43\\ 0.15\end{tabular}  & \begin{tabular}[c]{@{}l@{}}7.15\\ 0.63\end{tabular}     & \begin{tabular}[c]{@{}l@{}}12.65\\ 2.86\end{tabular} & \begin{tabular}[c]{@{}l@{}}23.87\\ 4.95\end{tabular} & \begin{tabular}[c]{@{}l@{}}30.93\\ 7.27\end{tabular} & \begin{tabular}[c]{@{}l@{}}31.79\\ 8.65\end{tabular}   & \begin{tabular}[c]{@{}l@{}}37.66\\ 12.51\end{tabular} & \begin{tabular}[c]{@{}l@{}}45.35\\ 14.18\end{tabular}  & \begin{tabular}[c]{@{}l@{}}50.72\\ 17.51\end{tabular} & \begin{tabular}[c]{@{}l@{}}63.13\\ 22.18\end{tabular}  & \begin{tabular}[c]{@{}l@{}}63.75\\ 25.86\end{tabular}  & \begin{tabular}[c]{@{}l@{}}84.13\\ 29.24\end{tabular}  & \begin{tabular}[c]{@{}l@{}}69.61\\ 37.34\end{tabular}                  \\ \midrule
100                & \begin{tabular}[c]{@{}l@{}}Deeploc\\ ABCDepth\end{tabular} & \begin{tabular}[c]{@{}l@{}}19.42\\ 0.22\end{tabular} & \begin{tabular}[c]{@{}l@{}}22.85\\ 0.92\end{tabular}    & \begin{tabular}[c]{@{}l@{}}33.81\\ 2.03\end{tabular} & \begin{tabular}[c]{@{}l@{}}77.45\\ 7.83\end{tabular} & \begin{tabular}[c]{@{}l@{}}69.04\\ 9.78\end{tabular} & \begin{tabular}[c]{@{}l@{}}105.56\\ 13.14\end{tabular} & \begin{tabular}[c]{@{}l@{}}97.39\\ 17.89\end{tabular} & \begin{tabular}[c]{@{}l@{}}120.05\\ 23.52\end{tabular} & \begin{tabular}[c]{@{}l@{}}140.04\\ 30.6\end{tabular} & \begin{tabular}[c]{@{}l@{}}131.85\\ 39.18\end{tabular} & \begin{tabular}[c]{@{}l@{}}127.36\\ 49.03\end{tabular} & \begin{tabular}[c]{@{}l@{}}212.42\\ 68.46\end{tabular} & \begin{tabular}[c]{@{}l@{}}183.27\\ 82.02\end{tabular}                 \\ \midrule
500                & \begin{tabular}[c]{@{}l@{}}Deeploc\\ ABCDepth\end{tabular} & \begin{tabular}[c]{@{}l@{}}-\\ 0.693\end{tabular}    & \begin{tabular}[c]{@{}l@{}}1616.53\\ 3.181\end{tabular} & \begin{tabular}[c]{@{}l@{}}*\\ 8.4\end{tabular}      & \begin{tabular}[c]{@{}l@{}}*\\ 27.9\end{tabular}     & \begin{tabular}[c]{@{}l@{}}*\\ 41.61\end{tabular}    & \begin{tabular}[c]{@{}l@{}}*\\ 53.73\end{tabular}      & \begin{tabular}[c]{@{}l@{}}*\\ 71.95\end{tabular}     & \begin{tabular}[c]{@{}l@{}}*\\ 89.36\end{tabular}      & \begin{tabular}[c]{@{}l@{}}*\\ 109.22\end{tabular}    & \begin{tabular}[c]{@{}l@{}}*\\ 140.18\end{tabular}     & \begin{tabular}[c]{@{}l@{}}*\\ 151.45\end{tabular}     & \begin{tabular}[c]{@{}l@{}}*\\ 180.5\end{tabular}      & \begin{tabular}[c]{@{}l@{}}*\\ 213.01\end{tabular}                     \\ \midrule
1000               & \begin{tabular}[c]{@{}l@{}}Deeploc\\ ABCDepth\end{tabular} & \begin{tabular}[c]{@{}l@{}}-\\ 1.165\end{tabular}    & \begin{tabular}[c]{@{}l@{}}-\\ 3.99\end{tabular}        & \begin{tabular}[c]{@{}l@{}}*\\ 14.389\end{tabular}   & \begin{tabular}[c]{@{}l@{}}*\\ 54.18\end{tabular}    & \begin{tabular}[c]{@{}l@{}}*\\ 74.38\end{tabular}    & \begin{tabular}[c]{@{}l@{}}*\\ 98.73\end{tabular}      & \begin{tabular}[c]{@{}l@{}}*\\ 129.85\end{tabular}    & \begin{tabular}[c]{@{}l@{}}*\\ 164.96\end{tabular}     & \begin{tabular}[c]{@{}l@{}}*\\ 203.37\end{tabular}    & \begin{tabular}[c]{@{}l@{}}*\\ 246.54\end{tabular}     & \begin{tabular}[c]{@{}l@{}}*\\ 286.17\end{tabular}     & \begin{tabular}[c]{@{}l@{}}*\\ 344.94\end{tabular}     & \begin{tabular}[c]{@{}l@{}}*\\ 39.16\end{tabular}                      \\ \midrule
2000               & \begin{tabular}[c]{@{}l@{}}Deeploc\\ ABCDepth\end{tabular} & \begin{tabular}[c]{@{}l@{}}-\\ 2.21\end{tabular}     & \begin{tabular}[c]{@{}l@{}}-\\ 7.86\end{tabular}        & \begin{tabular}[c]{@{}l@{}}-\\ 27.25\end{tabular}    & \begin{tabular}[c]{@{}l@{}}*\\ 107.46\end{tabular}   & \begin{tabular}[c]{@{}l@{}}*\\ 132.77\end{tabular}   & \begin{tabular}[c]{@{}l@{}}*\\ 180.02\end{tabular}     & \begin{tabular}[c]{@{}l@{}}*\\ 243.1\end{tabular}     & \begin{tabular}[c]{@{}l@{}}*\\ 297.6\end{tabular}      & \begin{tabular}[c]{@{}l@{}}*\\ 386.75\end{tabular}    & \begin{tabular}[c]{@{}l@{}}*\\ 475.87\end{tabular}     & \begin{tabular}[c]{@{}l@{}}*\\ 554.23\end{tabular}     & \begin{tabular}[c]{@{}l@{}}*\\ 666.4\end{tabular}      & \multicolumn{1}{c}{\begin{tabular}[c]{@{}c@{}}*\\ 764.74\end{tabular}} \\ \bottomrule
\end{tabular}
\end{minipage}}
\end{flushleft}
\end{table}

ABCDepth algorithm for finding Tukey depth of a point runs in $O(dn^2 + n^2 \log n)$ as we showed in Theorem \ref{complexitytd}.
Most of the algorithms for finding Tukey depth are exact and at the same time computationally expensive. One of the first exact
algorithms for bivariate data sets, called LDEPTH, is proposed by Rousseeuw and Ruts in \cite{rouru96}. It has complexity of
$O(n \log n)$ and like HALFMED, it outperforms ABCDepth for $d=2$.
Rousseeuw and Struyf in \cite{struyf98} implemented an exact algorithm for $d=3$ that runs in $O(n^2 \log n)$ time,
 and an approximate algorithm for $d > 3$ that runs in $O(md^3 + mdn)$, where $m$ is the number of
  directions perpendicular to hyperplanes through $d$ data points.

The later work of Chen et al. in \cite{chmowa13}, presented approximate algorithms, based on the third approximation method of Rousseeuw and Struyf,
 in \cite{struyf98}, reducing the problem from $d$ to $k$ dimensions.

 The first one, for $k=1$, runs in $O(\epsilon^{1-d}dn)$ time and the second one, for $k \geq 2$, runs in $O((\epsilon^{-1} c \log n)^d)$,
 where $\epsilon$ and $c$ are empirically chosen constants.
Another exact algorithm for finding Tukey depth in $\mathbb{R}^d$ is proposed by Liu and Zuo in \cite{zuo14}, which proves
to be extremely time-consuming (see Table 5.1 of Section 5.3 in \cite{pavlo14}) and the algorithm involves heavy computations,
but can serve as a benchmark.
Recently, Dyckerhoff and Mozharovskyi in \cite{pavlo16} proposed two exact algorithms for finding halfspace depth that run in
$O(n^d)$ and $O(n^{d-1} \log n)$ time.

Table 3 shows execution times of ABCDepth algorithm for finding a depth of a sample point. Measurements are
derived from synthetics data from the multivariate standard normal distribution. Execution time for each data set represents
averaged time consumed per data point. Most of the execution time ($\sim 95\%$) is spent on balls construction (see lines 1-10 of the Algorithm 1),
while finding a point
depth itself (iteration phase of the Algorithm 2) is really fast since it runs in $O(kn)$ time.

\begin{table}[]
\begin{flushleft}
\caption{Average time per data point.}
\label{depthABC}
\resizebox{0.7\textwidth}{!}{\begin{minipage}{\textwidth}
\begin{tabular}{@{}llllllllllllll@{}}
\toprule
\multirow{2}{*}{d} & \multicolumn{13}{c}{n}                                                                                                      \\ \cmidrule(l){2-14}
                   & 320  & 640  & 1280 & 2560  & 3000  & 3500  & 4000  & 4500  & 5000   & 5500   & 6000   & 6500   & 7000                       \\ \midrule
50                 & 0.07 & 0.21 & 1.21 & 8.23  & 12.64 & 19.22 & 28.56 & 42.04 & 64.33  & 77.45  & 98.79  & 121.86 & 150.73                     \\ \midrule
100                & 0.08 & 0.25 & 1.23 & 8.18  & 13.91 & 20.48 & 28.51 & 44.31 & 65.55  & 81.91  & 99.96  & 123.84 & 154.65                     \\ \midrule
500                & 0.13 & 0.42 & 1.84 & 11.42 & 17.93 & 21.42 & 35.41 & 52.07 & 73.21  & 95.18  & 119.82 & 141.88 & 176.12                     \\ \midrule
1000               & 0.17 & 0.53 & 2.52 & 13.53 & 20.13 & 32.35 & 41.71 & 58.72 & 82.92  & 103.84 & 138.69 & 155.32 & 200.55                     \\ \midrule
2000               & 0.26 & 0.94 & 4.12 & 18.32 & 28.12 & 38.79 & 56.04 & 73.79 & 102.98 & 124.48 & 156.54 & 186.59 & \multicolumn{1}{c}{232.45} \\ \bottomrule
\end{tabular}\end{minipage}}
\end{flushleft}
\end{table}

The ABCDepth algorithms have been implemented in Java. Tests for all algorithms are run using one kernel of Intel Core i7 (2.2 GHz) processor.
Computational codes are available from the authors upon request.

\section{Concluding Remarks}

There is no doubt that exact algorithms are needed,  whether it is about calculating the depth of a point, or a multivariate median.
Those algorithms are precise and serve as an benchmark measurement for all approximate algorithms.
Nowadays, the real life applications such as clustering, classification, outlier detection or, in general,
any kind of data processing, contain at least thousands of multidimensional observations.
In those cases, available exact algorithms are not the best choice - the complexity of the exact algorithms  grows exponentially with
dimension due to projections of sample points to a large number of  directions.
Hence, the exact algorithms are time consuming and often restricted by number of observations and its dimensionality.
 Therefore, for large data sets, approximate algorithms correspond to a good solution.

In this paper we presented approximate ABCDepth algorithms based on a novel \textit{balls intersection} idea explained in
Sections \ref{sect} and \ref{seca} that brings a lots of advantages.
With a small modification, the main idea from \cite{merkle10} is used for implementing two algorithms:
one is for calculating Tukey median and the another one is for calculating Tukey depth of a sample and out-of-sample point.
Using synthetic and real data sets and comparing our performances with those of previous approximate algorithms,
we showed that our algorithms fulfill the following:

\begin{enumerate*}[label=\roman*),itemjoin={{, }},afterlabel=\unskip{{~}}]
\item  high accuracy (see examples in Sections \ref{secexamplestm} and \ref{exampledepth})
\item they are much faster especially for large $n$ and $d$ (see Tables \ref{deepABC} and \ref{depthABC}
as well as Table 5.1 of Section 5.3 in \cite{pavlo14})
\item they can handle a larger number of multidimensional observations; those algorithms are the only algorithms  tested
with data sets that contain up to $n=7000$ and $d=2000$,
\item Algorithm 1 computes multidimensional median with the complexity of $O((d + k)n^2 + n^2\log{n})$
\item Algorithm 2 computes the depth of a single point with complexity of $O(dn^2 + n^2\log{n})$
\item both complexities have linear growth in $d$ and quadratic growth in $n$ (see Figures
\ref{fig:data_time2} and  \ref{fig:dim_time1})
\item an additional theoretical advantage of ABCDepth approach is that the data points are not assumed to be in "general position".
\end{enumerate*}

\medskip

{\bf Disclaimer.} A previous version   of this work was presented in a poster session of CMStatistics2016 and the abstract is
posted and available  in \cite{cmstat}. Otherwise, this paper has not been published in conference proceedings or elsewhere.

\medskip

{\bf Acknowledgements. } We would like to express our gratitude to Anja Struyf and coauthors for sharing the code and the data
that were used in their papers of immense importance in the area. Answering to Yijun Zuo's doubts about the first arXiv version \cite{bm161} of this paper
and solving difficult queries that he was proposing, helped us to improve the
presentation and the algorithms. The second author acknowledges the  support by grants III 44006 and 174024
from Ministry of Education, Science and Technological Development of
Republic of Serbia.

\vspace{.5cm}


\bibliographystyle{acm}

\bibliography{btukey}

\end{document}